\documentclass[submission]{eptcs}


\usepackage{etex}
\usepackage[usenames]{color}
\usepackage{makeidx}
\usepackage{verbatim}
\usepackage{fancyhdr}
\usepackage{fancybox}
\usepackage{stmaryrd}
\usepackage[reqno]{amsmath}
\usepackage{amsthm}
\usepackage{amssymb}
\usepackage{amsfonts}
\usepackage{bm}
\usepackage{amsbsy}
\usepackage[all]{xypic}
\usepackage {indentfirst}
\usepackage{graphicx}
\DeclareGraphicsRule{.jpg}{eps}{.bb}{}
\DeclareGraphicsRule{.bmp}{eps}{.bb}{}
\usepackage[hang]{subfigure}
\usepackage{cite}
\usepackage{proof}
\usepackage{bbding}
\usepackage{pgf}
\usepackage{tikz}
\usetikzlibrary{positioning,arrows,automata}
\usepackage{cancel}
\usepackage{semantic}
\usepackage{extarrows}
\usepackage{etoolbox} 

 \newtheorem{theorem}{Theorem}
 
 \newtheorem{lemma}[theorem]{Lemma}

 \theoremstyle{definition}
 \newtheorem{definition}[theorem]{Definition}

 \theoremstyle{remark}


\newcommand{\HOPi}{\ensuremath{\Pi}}

\newcommand{\FOPi}{\ensuremath{\pi}}

\newcommand{\HOPid}{\ensuremath{\Pi^d}}
\newcommand{\HOPiD}{\ensuremath{\Pi^D}}
\newcommand{\HOPiDd}{\ensuremath{\Pi^{D,d}}}
\newcommand{\para}{\,|\,}
\newcommand{\fn}[1]{\mbox{\rm fn($#1$)}} 
\newcommand{\bn}[1]{\mbox{\rm bn($#1$)}} 
\newcommand{\n}[1]{\mbox{\rm n($#1$)}} 
\newcommand{\fnv}[1]{\mbox{\rm fnv($#1$)}} 
\newcommand{\bnv}[1]{\mbox{\rm bnv($#1$)}} 
\newcommand{\nv}[1]{\mbox{\rm nv($#1$)}} 
\newcommand{\fpv}[1]{\mbox{\rm fpv($#1$)}} 
\newcommand{\bpv}[1]{\mbox{\rm bpv($#1$)}} 
\newcommand{\pv}[1]{\mbox{\rm pv($#1$)}} 

%
\newcommand{\encode}[3]{ \lds #1 \rds^{#2}_{#3}}

\newcommand{\fosub}[2]{\{#1/#2\} }
\newcommand{\hosub}[2]{\{#1/#2\} }
%
\newcommand{\vect}[1]{{#1_1,#1_2,...,#1_n} }
\newcommand{\ve}[1]{\widetilde{#1}}

\newcommand{\size}[1]{|#1|}
\newcommand{\st}[1]{\,{\xrightarrow{#1}}\, }

\newcommand{\wt}[1]{{\xLongrightarrow{#1}} }
\newcommand{\rc}[1]{{\color{red} #1}}
\newcommand{\rrc}[1]{{#1}} 
\newcommand{\bc}[1]{{\color{blue} #1}}
\newcommand{\rbc}[1]{{#1}} 
\newcommand{\sep}{\vspace*{0.7cm}}
\newcommand{\sepp}{\vspace*{0.3cm}}

\newcommand{\nsepv}[1]{\vspace{0mm}}
\newcommand{\nsepvs}[1]{\vspace{-#1cm}} 
\newcommand{\psepvs}[1]{\vspace{#1cm}} 

%

%
 
%
 















%




\newcommand{\SE}{\equiv }















%
\newcommand{\lds}{[\![}
\newcommand{\rds}{]\!]}
\newcommand{\encoding}[3]{ \lds #1 \rds^{#2}_{#3}}

\newcommand{\enc}[1]{\llbracket #1 \rrbracket}
\newcommand{\DEF}{\stackrel{\textrm{def}}{=}} 
\newcommand{\myxcancel}[1]{\xcancel{#1}} 
\newcommand{\lrangle}[1]{\langle #1 \rangle} 
 


%
\newcommand{\EWLB}{\ensuremath{\approx_{l}} } 
\newcommand{\SGB}{\ensuremath{\sim_{g}} }
\newcommand{\WGB}{\ensuremath{\approx_{g}} }
\newcommand{\WCB}{\ensuremath{\approx_{ct}} }
\newcommand{\WWCB}{\ensuremath{\dot\approx_{ct}} }
\newcommand{\SCB}{\ensuremath{\sim_{ct}} }
\newcommand{\WSCB}{\ensuremath{\dot\sim_{ct}} }
\newcommand{\WNB}{\ensuremath{\approx_{nr}} }
\newcommand{\SNB}{\ensuremath{\sim_{nr}} }
\newcommand{\diverge}[1]{#1^{\Uparrow}}
\newcommand{\trigger}{\ensuremath{Tr_m} } 
\newcommand{\triggerD}{\ensuremath{Tr_m^D} } 
\newcommand{\triggerd}{\ensuremath{Tr_m^d} } 

\makeatletter
\@ifundefined{newslide}{%
 
}{%
}
\makeatletter
\@ifundefined{endinput}{%
 
}{%
}

\newcommand{\stress}[1]{\rc{#1}} 

%

\newcommand{\tdup}[1]{}



\RequirePackage[normalem]{ulem}
\RequirePackage{color}\definecolor{RED}{rgb}{1,0,0}\definecolor{BLUE}{rgb}{0,0,1}



%
\newtoggle{appendixing} 
\togglefalse{appendixing}

\title{Higher-order Processes with Parameterization over Names and Processes}
\author{Xian Xu
\thanks{This work has been supported by project ANR 12IS02001 PACE and NSF of China (61261130589,
61472239, 61572318).}
\institute{East China University of Science and Technology, China}
\email{xuxian@ecust.edu.cn}
}

\begin{document}
\maketitle

\begin{abstract}
Parameterization extends higher-order processes with the capability of abstraction and application (like those in lambda-calculus). This extension is strict, i.e., higher-order processes equipped with parameterization is computationally more powerful. This paper studies higher-order processes with two kinds of parameterization: one on names and the other on processes themselves. We present two results. One is that in presence of parameterization, higher-order processes can encode first-order (name-passing) processes in a quite neat fashion, in contrast to the fact that higher-order processes without parameterization cannot encode first-order processes at all. In the other result, we provide a simpler characterization of the (standard) context bisimulation for higher-order processes with parameterization, in terms of the normal bisimulation that stems from the well-known normal characterization for higher-order calculus. These two results demonstrate more essence of the parameterization method in the higher-order paradigm toward expressiveness and behavioural equivalence.

\vspace*{.1cm}
\noindent\emph{keywords}: Parameterization, Context bisimulation, Higher-order, First-order, Processes 
\end{abstract}

\section{Introduction}\label{s:introduction}

In concurrent systems, higher-order means that processes communicate by means of process-passing (i.e., program-passing), whereas first-order means that processes communicate through name-passing (i.e., reference-passing). Parameterization originates from lambda-calculus (which is itself of higher-order nature), and enables processes, in a concurrent setting, to do abstraction and application in a way similar to that of lambda-calculus. Say $P$ is a higher-order process, then an abstraction $\lrangle{U}P$ means abstracting the variable $U$ in $P$ to obtain somewhat a function (like $\lambda U.P$ in terms of lambda-calculus), and correspondingly an application $(\lrangle{U}P)\lrangle{K}$ means applying process $K$ to the abstraction and obtaining an instantiation $P\hosub{K}{U}$ (i.e., replacing each variable $U$ in $P$ with $K$, like $(\lambda U.P)K$ in terms of lambda-calculus). There are basically two kinds of parameterization: parameterization on names and parameterization on processes. In the former, $U$ is a name variable and $K$ is a concrete name. In the latter, $U$ is a process variable and $K$ is a concrete process.
Parameterization is a natural way to extend the capacity of higher-order processes and this extension is strict, that is, the computational power strictly increases with the help of parameterization \cite{LPSS10}. In this paper, we study higher-order processes in presence of parameterization.


Comparison between higher-order and first-order processes is a frequent topic in concurrency theory. Such comparison, for example, asks whether higher-order processes can correctly express first-order processes, or vice versa. It is well known that first-order processes can elegantly encode higher-order processes \cite{San92,SW01a}; the converse is however not quite the case. As the first issue, this paper addresses how to encode first-order processes with higher-order processes (equipped with parameterization).

The very early work on using higher-order process to interpret first-order ones is contributed by Thomsen \cite{Tho93}, who proposed a prototype encoding of first-order processes with higher-order processes with the relabelling operator (like that in CCS \cite{Mil89}). This encoding uses a gadget called wire to mimic the function of a name in the higher-order setting, and essentially employs the relabelling to make the wires work properly so as to fulfill the role of names. Due to the arbitrary ability of changing names (e.g., from global to local), the encoding has a correct operational correspondence (i.e., the correspondence between the processes before and after the encoding), but is very hard to analyze for full abstraction (i.e., the first-order processes are equivalent if and only if their encodings are; the `if only' direction is called soundness and the other direction is called completeness). Unfortunately, without the relabelling operator, the basic higher-order process (which has the elementary operators including input, output, parallel composition and restriction) is not capable of encoding first-order processes \cite{Xu12}. In the literature, several variants of higher-order processes are exploited to encode first-order processes. In \cite{SW01a}, an asynchronous higher-order calculus with parameterization on names is used to compile the asynchronous localized $\pi$-calculus (a variant of the first-order $\pi$-calculus \cite{MPW92}). This encoding depends heavily on the notions of `localized' which means only the output capability of a name can be communicated during interactions, and `asynchronous' which means the output is non-blocking. Though technically a nice reference, intuitively because this variant of $\pi$-calculus is less expressive than the full $\pi$-calculus, it is not very surprising that the higher-order processes with parameterization on names can interpret it faithfully, i.e., fully abstract with respect to barbed congruence. Then in \cite{XYL15}, we explore the encoding of the full $\pi$-calculus using higher-order processes with parameterization on names. In that effort, we construct an encoding that harnesses the idea of Thomsen's encoding and show that it is complete. In \cite{BHG06}, Bundgaard et al. use the HOMER to translate the name-passing $\pi$-calculus. This translation is possible because a HOMER process can, in a way quite different from parameterization, operates names in the continuation processes (resources), and this allows flexibility so that names can be communicated in an intermediate fashion. 
In \cite{KPY16}, Kouzapas et al. propose fully abstract encodings concerning first-order processes and session typed higher-order processes. Their encodings use session types to govern communications and show that in the context of session types, first-order and higher-order processes are equally expressive. This work is well related to those mentioned above (and that in this paper), though the context is quite different (i.e., session typed processes).

Despite the extensive research on encoding first-order processes with (variant) higher-order processes, the following question has remained open: \emph{Is there an encoding of first-order processes by the higher-order processes with the capability of parameterization?}
This question is important in two aspects. One is that parameterization brings about the core of lambda-calculus to higher-order concurrency, so it appears reasonable for such an extension to be able to express first-order processes which has long been shown to be capable of expressing the lambda-calculus. Knowing how this can be achieved would be interesting. The other is that the converse has a almost standard encoding method, i.e., encoding variants of higher-order processes with first-order processes. Yet higher-order processes are still short of an effective way to express first-order ones. Resolving this can also provide (technical) reference for practical work beyond the encoding itself.

%
%


Closely related with the first issue on expressiveness, the second issue this paper deals with is the characterization of bisimulation on higher-order processes. Bisimulation theory is a pivotal part of a process model, including the higher-order models, concerning which the almost standard behavioral equivalence is the context bisimulation \cite{San92}. The central idea of context bisimulation is that when comparing output actions, the transmitted process and the residual process (i.e., the process obtained after sending a process) are considered at the same time, rather than separately (like in the applicative higher-order bisimulation proposed by Thomsen \cite{Tho90, Tho93}). For example (for simplicity we do not consider local names), if $P$ and $Q$ are context bisimilar and $P\st{\overline{a}A} P'$ (i.e., $P$ outputs $A$ on $a$ and becomes $P'$), then $Q\wt{\overline{a}B} Q'$ (i.e., $Q$ outputs $B$ on $a$ possibly involving some internal actions and becomes $Q'$), and for every (receiving) environment $E[\cdot]$, $P'\para E[A]$ and $Q'\para E[B]$ are still context bisimilar (here $\para$ denotes concurrency, and $E[A]$ means putting $A$ in the environment $E$). However, in its original form, context bisimulation suffers from inconvenience to use, because it calls for checking with regard to every possible receiving environment. This leads to works on the simpler characterization, called normal bisimulation, of the context bisimulation. The central idea of normal bisimulation, proposed by Sangiorgi \cite{San92, SW01a}, is that instead of checking with a general process in input and a general context in output, one only needs to comply with the matching of some special process or context, specifically a class of terms called triggers. To meet this challenge, a crucial so-called factorization theorem is used to circumvent technical difficulty. We briefly explain how normal bisimulation is designed in the basic higher-order processes. In particular, the factorization states the following property, where $\WCB$ denotes context bisimulation, and $\overline{m}.P$ and $m.P$ are CCS-like prefixes in which the communicated contents are not important \cite{SW01a}. \nsepvs{.3}
\[ E[A] \WCB (m)(E[\overline{m}.0] \para !m.A)  \nsepvs{.3} 
\] 
One can clearly identify the reposition of the process $A$ of interest, which in fact captures the core of the property: move $A$ to a new position as a repository, which in turn can be retrieved as many times as needed in the original environment $E$, with the help of the pointer undertaken by the fresh channel $m$ (called trigger). Inspired by the factorization, normal bisimulation can be developed. We take the output as an example (input is similar), and restriction operation in output is omitted for the sake of simplicity. As stated above, context bisimulation requires the following chasing diagram, which is now extended with an application of the factorization. \nsepvs{.3} 
\[ \nsepvs{.2} 
\xymatrix@C=20pt{
 &  & P \ar@{.}[rr]|-{\WCB}\ar@{->}[d]_{\overline{a}A}  &  & Q \ar@{=>}[d]^{\overline{a}B}  &  & \\
P'\para (m)(E[\overline{m}.0] \para !m.A) \ar@{}[r]|-{\WCB} & P'\para E[A] \ar@/_1.6pc/@{.}[0,4]|{\WCB} & P'  & &  Q'  & Q'\para E[B] \ar@{}[r]|-{\WCB}  &  Q'\para (m)(E[\overline{m}.0] \para !m.B)
}
\]
Since context bisimulation $\WCB$ is a congruence, one can cancel the common part of (the leftmost) $P'\para (m)(E[\overline{m}.0] \para !m.A)$ and (the rightmost) $Q'\para (m)(E[\overline{m}.0] \para !m.B) $, and simply requires that $P'\para !m.A$ and $Q'\para !m.B$ are related, without fearing losing any discriminating power. This in turn leads to the following requirement in normal bisimulation (assuming $\mathcal{R}$ is a normal bisimulation). \nsepvs{.3} 
\[\nsepvs{.2} 
\xymatrix{
 & P \ar@{.}[rr]|-{\mathcal{R}}\ar@{->}[d]_{\overline{a}A}  &  & Q \ar@{=>}[d]^{\overline{a}B}  &   \\
 P'\para !m.A \ar@/_1.6pc/@{.}[0,4]|{\mathcal{R}} & P'  & &  Q'  & Q'\para !m.B
}
\]

Subsequent works attempt to extend the normal bisimulation to variants of higher-order processes. In Sangiorgi's initial work \cite{San92}, normal bisimulation is also obtained for higher-order processes with parameterization. That characterization , however, is made in the presence of first-order processes (i.e., name-passing), and thus not very convincing with regard to the inner complexity of context bisimulation in presence of parameterization. In \cite{Xu13}, we revisited this issue and show that in a purely higher-order setting (viz., no name-passing at all), parameterization on processes does not deprive one of the convenience of normal bisimulation. Although the idea is inspired by the original work of Sangiorgi, the proof approach is more direct. In \cite{LSS09, LSS11}, Lenglet et al. study higher-order processes with passivation (i.e., the process in the output position may evolve), and report a normal bisimulation for a sub-calculus without the restriction operator, but that characterization has somewhat a different flavor, since the higher-order bisimulation \cite{Tho93} rather than the context bisimulation is taken. Though these works carry out insightful research and give meaningful references, it is currently still not clear how to construct a simple characterization of context bisimulation based on parameterization over names, and this raises the following fundamental question: \emph{Does higher-order processes with parameterization on names have a normal bisimulation?}
In the second part of this paper, we move further from \cite{Xu13,San92}, and offer a normal bisimulation for higher-order processes in the setting of parameterization over both names and processes.  



%

\psepvs{.2}
\noindent\textbf{Contribution}~ 
In summary, our contribution of this work is as follows.
\begin{itemize}
\item 
We show that the extension with parameterization (on both names and processes) allows higher-order processes to interpret first-order processes in a surprisingly concise yet elegant manner. Such kind of encoding is of a somewhat dissimilar flavor, and moreover not possible in absence of parameterization. We give the detailed encoding strategy, and prove that it satisfies a number of desired properties well-known in the field. \\
The idea of the encoding in this paper is quite different from our abovementioned work in \cite{XYL15}, where we build an encoding that allows parameterization merely on names (i.e., no parameterization on processes). The soundness of that encoding is not very satisfying, which in a sense defeats some purpose of the encoding, and this actually precipitates the work here.

\item 
We establish the normal bisimulation, as an effectively simpler characterization of context bisimulation, for higher-order processes with both kinds of parameterization. This normal bisimulation extends those for higher-order processes without parameterization, particularly in the manipulation of abstractions on names. As far as we are concerned, similar characterization has not been reported before.\\
That the processes are purely higher-order (that is, without name-passing) improves the result in \cite{San92}, and articulates that the characterization based on normal bisimulation is a property independent of first-order name-passing. Moreover, this does not contradict the argument in \cite{Xu13} that there is little hope that normal bisimulation exists in higher-order processes with (only) parameterization on names, because here the processes are capable of parameterization on processes as well (though still higher-order).
\end{itemize}

\noindent\textbf{Organization}
The remainder of this paper is organized as below. In Section \ref{s:preliminary}, we introduce the calculi and a notion of encoding used in this paper. In Section \ref{s:encoding}, we present the encoding from first-order processes to higher-order processes with parameterization, and discuss its properties. In Section \ref{s:normal}, we define the normal bisimulation for higher-order processes with parameterization, and prove that it truly characterizes context bisimulation. Section \ref{s:conclusion} concludes this work and point out some further directions.


\section{Preliminary}\label{s:preliminary}
In this section we present the basic definitions and notations used in this work. \psepvs{.2}

\noindent\textbf{\large 2.1~ Calculus \FOPi}


The first-order (name-passing) pi-calculus, \FOPi, is proposed by Milner et al. \cite{MPW92}. For the sake of simplicity, throughout the paper, names (ranged over by $m,n,u,v,w$) are divided into two classes: name constants (ranged over by $a,b,c,d,e$) and name variables (ranged over by $x,y,z$) \cite{EN86a,EN00,Fu15}. The grammar is as below with the constructs having their standard meaning. We note that guarded input replication is used instead of general replication, and this does not decrease the  expressiveness \cite{San98}\cite{FL09a}. \nsepvs{.2}
\[P,Q := 0 \,\Big{|}\, m(x).P \,\Big{|}\, \overline{m}n.P \,\Big{|}\,  (c)P \,\Big{|}\, P\para Q \,\Big{|}\, !m(x).P 
\]

A name constant $a$ is bound (or local) in $(a)P$ and free (or global) otherwise. A name variable $x$ is bound in $a(x).P$ and free otherwise.
Respectively $\fn{\cdot}, \bn{\cdot}, \n{\cdot}, \fnv{\cdot}, \bnv{\cdot}, \nv{\cdot}$ denote free name constants, bound name constants, names, free name variables, bound name variables, and name variables in a set of processes. A name is fresh if it does not appear in any process under discussion. By default, closed processes are considered, i.e., those having no free variables.
As usual, here are a few derived operators: $\overline{a}(d).P \DEF (d)\overline{a}d.P$, $a.P \DEF  a(x).P \;(x\notin fv(P))$, $\overline{a}.P \DEF \overline{a}(d).P\;(d\notin fn(P))$; $\tau.P\DEF (a)(a.P\para \overline{a}.0)$ ($a$ fresh). A trailing $0$ process is usually omitted.
We denote tuples by a tilde. For tuple $\ve{n}$: $\size{\ve{n}}$ denotes its length; 
$m\ve{n}$ denotes incorporating $m$. Multiple restriction $(c_1)(c_2)\cdots (c_k)E$ is abbreviated as $(\ve{c})E$.
Substitution $P\fosub{n}{m}$ is a mapping that replaces $m$ with $n$ in $P$ while keeping the rest unchanged. 
A context $C$ is a process with some subprocess replaced by the hole $[\cdot]$, and $C[A]$ is the process obtained by filling in the hole by $A$.

The semantics of \FOPi\ is defined by the LTS (Labelled Transition System) below. \nsepvs{.2} 
\[
\begin{array}{lllll}
\infer{a(x).P\st{a(b)} P\fosub{b}{x}}{} \quad &
\infer{\overline{a}b.P\st{\overline{a}b} P}{} \quad &
\infer{!a(x).P \st{a(b)} P\fosub{b}{x}\para !a(x).P}{}\quad  & 
\infer[{\scriptstyle c\not\in n(\lambda)}]{(c)P\st{\lambda} P'}{P\st{\lambda} P'} \quad
\end{array}
\]
\[
\begin{array}{llll}
\infer[{\scriptstyle c\neq a}]{(c)P\st{\overline{a}(c)} P'}{P\st{\overline{a}c} P'} \quad &
\infer[{\scriptstyle bn(\lambda)\cap fn(Q)=\emptyset}]{P\para Q\st{\lambda} P'\para Q}{P\st{\lambda} P'} \quad &
\infer{P\para Q\st{\tau}P'\para Q'}{P\st{a(b)}P' \quad Q\st{\overline{a}b} Q'}\quad &
\infer{P\para Q\st{\tau}(b)(P'\para Q')}{P\st{a(b)} P'\quad Q\st{\overline{a}(b)} Q'} 
\end{array}
\]
Actions, ranged over by $\lambda,\alpha$, comprise internal move $\tau$, and visible ones: input ($a(b)$), output ($\overline{a}b$) and bound output ($\overline{a}(c)$). We note that actions occur only on name constants, and a communicated name is also a constant. 
We denote by $\equiv$ the standard structural congruence \cite{MPW92}\cite{SW01a}, which is the smallest relation satisfying the monoid laws for parallel composition, commutative laws for both composition and restriction, and a distributive law $(c)(P\para Q) \SE (c)P\para Q$ (if $c\notin fn(Q)$).
We use $\wt{}$ for the reflexive transitive closure of $\st{\tau}$, $\wt{\lambda}$ for $\wt{}\st{\lambda}\wt{}$, and $\wt{\widehat{\lambda}}$ for $\wt{\lambda}$ if $\lambda$ is not $\tau$, and $\wt{}$ otherwise. A process $P$ is divergent, denoted $\diverge{P}$, if it has an infinite sequence of $\tau$ actions.

Throughout the paper, we use the following standard notion of ground bisimulation\cite{MPW92,EN00,SW01a}.  
\begin{definition}
A ground bisimulation is a symmetric relation $\mathcal{R}$ on \FOPi\ processes s.t. whenever $P\,\mathcal{R}\, Q$ the following property holds:
If $P\st{\alpha} P'$ where $\alpha$ is $a(b)$, $\overline{a}b$, $\overline{a}(b)$, or $\tau$, then $Q\wt{\hat{\alpha}} Q'$ for some $Q'$ and $P'\,\mathcal{R}\, Q'$.
Ground bisimilarity, $\WGB$, is the largest ground bisimulation.
\end{definition}
We denote by $\SGB$ the strong ground bisimilarity (i.e., replacing $\wt{\widehat{\alpha}}$ with  $\st{\alpha}$ in the definition). It is well-known that $\WGB$ is a congruence \cite{EN00,SW01a}, and coincides with the so-called local bisimilarity as defined below \cite{Fu05b,Xu12}.
\begin{definition}\label{ext-local-bisi}
Local bisimilarity $\EWLB$ is the largest symmetric local bisimulation relation $\mathcal{R}$ on $\pi$ processes such that:
(1) if $P\st{\lambda} P'$, $\lambda$ is not bound output, then $Q\wt{\widehat{\lambda}} Q'$ and $P'\,\mathcal{R}\, Q'$;
(2) if $P\st{\overline{a}(b)} P'$, then $Q\wt{\overline{a}(b)} Q'$, and for every $R$, $(b)(P'\para R)\,\mathcal{R}\, (b)(Q'\para R)$.
\end{definition}


\psepvs{.2}
\noindent\textbf{\large 2.2 ~Calculus \HOPiDd}

For the sake of conciseness, we first define the basic higher-order calculus and then the extension with parameterizations.

\psepvs{.2}
\noindent\textbf{2.2.1 ~ Calculus \HOPi}

The basic higher-order (process-passing) calculus, \HOPi, is defined by the following grammar in which the operators have their standard meaning.
We denote by $X,Y,Z$ process variables.
\[
\begin{array}{l}
T,T' ::= 0 \,\Big{|}\, X \,\Big{|}\,  u(X).T \,\Big{|}\, \overline{u}T'.T \,\Big{|}\, T\para T' \,\Big{|}\, (c)T \,\Big{|}\,  !u(X).T \,\Big{|}\, !\overline{u}T'.T
\end{array}
\]
We use $a.0$ for $a(X).0$, $\overline{a}.0$ for $\overline{a}0.0$, 
$\tau.P$ for $(a)(a.P\para \overline{a}.0)$, and sometimes $\overline{a}[A].T$ for $\overline{a}A.T$. Like \FOPi, a tilde represents a tuple.
We reuse the notations for names in \FOPi\, and additionally use $\fpv{\cdot}$, $\bpv{\cdot}$, $\pv{\cdot}$ respectively to denote free process variables, bound process variables and process variables in a set of processes. Closed processes are those having no free variables.
A higher-order substitution $T\hosub{A}{X}$ replaces variable $X$ with $A$ and can be extended to tuples in the usual way.
$E[\ve{X}]$ denotes $E$ with (possibly) variables $\ve{X}$, and $E[\ve{A}]$ stands for $E\hosub{\ve{A}}{\ve{X}}$.
The guarded replications used in the grammar can actually be derived \cite{Tho93,LPSS08}, and we make them primitive for convenience. 
The semantics of \HOPi\ is as below. \nsepvs{.2}
\[
\begin{array}{lllll}
\frac{\displaystyle }{\displaystyle a(X).T\st{a(A)} T\hosub{A}{X}}  &
 \frac{}{\displaystyle \overline{a}A.T\st{\overline{a}A} T}  &
\frac{\displaystyle T\st{\lambda} T'}{\displaystyle (c)T\st{\lambda} (c)T'}{\scriptstyle c\not\in n(\lambda)} &
\frac{\displaystyle T\st{\lambda} T'}{\displaystyle T\para T_1\st{\lambda} T'\para T_1} & 
 \frac{}{\displaystyle !\overline{a}A.T\st{\overline{a}A} T \para !\overline{a}A.T}
\end{array}
\]
\[\begin{array}{lll}
\frac{\displaystyle T\st{(\ve{c})\overline{a}[A]} T'}{\displaystyle (d)T\st{(d)(\ve{c})\overline{a}[A]} T'} {\scriptstyle d \in fn(A){-}\{\ve{c},a\}}  &
 \frac{\displaystyle T_1\st{a(A)} T_1', T_2\st{(\ve{c})\overline{a}[A]} T_2'}{\displaystyle T_1\para T_2 \st{\tau}(\ve{c})(T_1'\,|\,T_2')} & 
\frac{\displaystyle }{\displaystyle !a(X).T\st{a(A)} T\hosub{A}{X}\para !a(X).T }
\end{array}
\]
We denote by $\alpha,\lambda$ the actions: internal move ($\tau$), input ($a(A)$), output ($(\ve{c})\overline{a}A$) in which $\ve{c}$ is some local names carried by $A$ during the output. We always assume no name capture with resort to $\alpha$-conversion.
The notations $\wt{}$, $\wt{\lambda}$ and $\wt{\widehat{\lambda}}$ are similar to those in \FOPi. We also reuse $\equiv$ for the structural congruence in \HOPi\ (and also \HOPiDd\ to be defined shortly) \cite{SW01a}, and this shall not raise confusion under specific context.

\psepvs{.2}
\noindent\textbf{2.2.2 ~ Calculus \HOPiDd}

Parameterization extends \HOPi\ with the syntax and semantics below. 
Symbol $U_i$ (respectively, $K_i$) ($i=1,...,n$) is used as a meta-parameter of an abstraction (respectively, meta-instance of an application), and stands for a process variable or name variable (respectively, a process or a name). \nsepvs{.2} 
\[
\begin{array}{ll}
\mbox{\small Extension of syntax: } & \lrangle{\vect{U}} T \;\Big{|}\;  T'\lrangle{\vect{K}}\\
\mbox{\small Extension of semantics: } & \frac{\displaystyle Q\equiv P\quad P\st{\lambda} P'\quad P'\equiv Q'}{\displaystyle Q\st{\lambda} Q'} \\
\mbox{\small Extension of structural congruence ($\equiv$): } & F\lrangle{\ve{K}} \equiv T\hosub{\ve{K}}{\ve{U}} \quad \mbox{ where } F\DEF \lrangle{\ve{U}}T \mbox{ and } \size{\ve{U}}{=}\size{\ve{K}} 
\end{array}
\]

We denote by $\lrangle{\vect{U}} T$ an n-ary abstraction in which $\vect{U}$ are the parameters to be instantiated during the application $T'\lrangle{\vect{K}}$ in which the parameters are replaced by instances $\vect{K}$. This application is modelled by an extensional rule for structural congruence as above, in combination with the usual LTS rule for structural congruence as well, so as to make the process engaged in application evolve effectively.
The condition $\size{\ve{U}}{=}\size{\ve{K}}$ requires that the parameters and the instantiating objects should be equal in length.

Now parameterization on process is obtained by taking $\ve{U},\ve{K}$ as $\ve{X},\ve{T'}$ respectively, and parameterization on names is obtained by taking $\ve{U},\ve{K}$ as $\ve{x},\ve{u}$ respectively. The corresponding abstractions are sometimes called process abstraction and name abstraction respectively. For convenience, names are handled in the same way as that in \FOPi\ (so are the related notations). We denote by \HOPiDd\ the calculus \HOPi\ extended with both kinds of parameterizations.
Calculus \HOPiDd\ can be made more precise with the help of a type system \cite{San92} which however is not important for this work and not presented.
We note that in $\lrangle{\vect{U}} T$, variables $\vect{U}$ are bound.

Throughout the paper, we reply on the following notion of context bisimulation \cite{San92,San94}. 
\begin{definition}\label{context-bisimulation}
A symmetric relation $\mathcal{R}$ on \HOPiDd\ processes is a context bisimulation, if $P\,\mathcal{R}\, Q$ implies the following properties: 
(1) if $P \st{\alpha} P'$ and $\alpha$ is $a(A)$ or $\tau$, then $Q \wt{\widehat{\alpha}} Q'$ for some $Q'$ and $P'\,\mathcal{R}\, Q'$; \\
(2) if $P \st{(\ve{c})\overline{a}A} P'$ and $A$ is a process abstraction or name abstraction or not an abstraction, then $Q \wt{(\ve{d})\overline{a}B} Q'$ for some $B$ that is accordingly a process abstraction or name abstraction or not an abstraction, and moreover for every $E[X]$ s.t. $\{\ve{c},\ve{d}\}\cap fn(E)=\emptyset$ it holds that $(\ve{c})(E[A]\para P') \; \mathcal{R}\;  (\ve{d})(E[B]\para Q')$.
Context bisimilarity, written $\WCB$, is the largest context bisimulation.
\end{definition}
We note that the matching for output in context bisimulation is required to bear the same kind of communicated process as compared to the simulated action. Relation $\SCB$ denotes the strong context bisimilarity. As is well-known, $\WCB$ is a congruence \cite{San92,San94}.

\psepvs{.2}
\noindent\textbf{\large 2.3 ~A notion of encoding}

We define a notion of encoding in this section. 
We assume a process model $\mathcal{L}$ is a triplet $(\mathcal{P}, \st{}, \approx)$, where $\mathcal{P}$ is the set of processes, $\st{}$ is the  LTS with a set $\mathcal{A}$ of actions, and $\approx$ is a behavioral equivalence.
Given $\mathcal{L}_i\DEF (\mathcal{P}_i, \st{}_i, \approx_i)$ ($i{=}1,2$), an encoding from  $\mathcal{L}_1$ to $\mathcal{L}_2$ is a function $\encoding{\cdot}{}{}: \mathcal{P}_1 \longrightarrow \mathcal{P}_2$ that satisfies some set of criteria.
Notation $\encode{\mathcal{P}_1}{}{}$ stands for the image of the $\mathcal{L}_1$-processes inside $\mathcal{L}_2$ under the encoding. It should be clear that  $\encode{\mathcal{P}_1}{}{}\subseteq \mathcal{P}_2$. We use $\dot\approx_2$ to denote the behavioural equivalence $\approx_2$ restricted to $\encode{\mathcal{P}_1}{}{}$ \cite{Gor09}.
The following criteria set (Definition \ref{gorla-like-cond}) used in this paper stems from \cite{LPSS10} (the variant \cite{LPSS10} provides is based on \cite{Gor08a}). 
As is known, encodability enjoys transitivity \cite{LPSS10}.
We will show that the encoding in Section \ref{s:encoding} satisfies all the criteria in Definition \ref{gorla-like-cond} except adequacy (1a).

\begin{definition}[Criteria for encodings]\label{gorla-like-cond}
\noindent\textbf{Static criteria}: 
(1) Compositionality. \emph{For any $k$-ary operator $op$ of $\mathcal{L}_1$, and all $P_1,...,P_k\in \mathcal{P}_1$,  $\encoding{op(P_1,...,P_k)}{}{}$ $=\, C_{op}[\encoding{P_1}{}{},...,\encoding{P_k}{}{}]$ for some (multihole) context $C_{op}[\cdots]\in \mathcal{P}_2$;}

\noindent
\textbf{Dynamic criteria}: 
(1a) Adequacy. \emph{$P \approx_1 P'$ implies $\encoding{P}{}{} \approx_2 \encoding{P'}{}{}$. This is also known as \emph{soundness}. The converse is known as \emph{completeness};} 
(1b) Weak adequacy (or weak soundness). \emph{$P \approx_1 P'$ implies $\encoding{P}{}{} \dot\approx_2 \encoding{P'}{}{}$;} ~
(2) Divergence-reflecting. \emph{If $\encoding{P}{}{}$ diverges, so does $P$.}
\end{definition}

Adequacy (1a) obviously entails weak adequacy (1b), since $\approx_2$ allows more processes in the target model $\mathcal{L}_2$ (thus more variety of contexts). Yet weak adequacy is still useful because it may be too strong if one requires the encoding process to be compatible with all kinds of contexts in the target model. For instance, in order to achieve first-order interactions in a higher-order target model, it appears quite demanding to require equivalence under all kinds of input because the target higher-order model may have more powerful computation ability (so it can feed a much involved input). So sometimes using limited contexts in the target model may be sufficient to meet the goal of the encoding. 

It is worthwhile to note that the criteria are short of those for operational correspondence. Although generally soundness and completeness appear not very informative in absence of operational correspondence (and the others) \cite{GN16,Par16}, arguably we make this choice in this work out of the following consideration. The criteria for operational correspondence used in \cite{Gor08a}, though proven useful in many models, appear not quite convenient when discussing encodings into higher-order models \cite{LPSS10}, since (for example) the case of input can be hard to comply with the criteria due to the increased complexity in the environment (namely in the context of the target higher-order model). After all, here it seems more important to have the soundness and completeness properties eventually (w.r.t. the canonical bisimulation equivalences in the source and target models), likely in a different manner of operational correspondence. Notwithstanding, we will discuss the operational correspondence of the encoding in Section \ref{s:encoding}. Moreover, as will be seen, the concrete operational correspondence in there somehow strengthens the criteria of operational correspondence (and related concepts) used in \cite{Gor08a,LPSS10} (in \cite{Gor08a} the criteria are not action-labelled and thus the notion of success sensitiveness is contrived; in \cite{LPSS10} a labelled variant criteria is posited to its purpose). 
Beyond the scope of this paper, it would be intriguing to examine the possibility of formally pinning down some variant criteria of operational correspondence having vantage for higher-order (process) models.



\section{Encoding \FOPi\ into \HOPiDd}\label{s:encoding}
We show that \FOPi\ can be encoded in \HOPiDd.

\subsection{The encoding}
We have the encoding defined as below (being homomorphism on the other operators, except that the encoding of input guarded replication is defined as $\enc{!m(x).P}\DEF !\enc{m(x).P}$).
\[
\begin{array} {lrcll}
 & \enc{m(x).P} & \DEF & m(Y).Y\lrangle{\lrangle{x}\enc{P}} & \\ 
 & \enc{\overline{m}n.Q} &\DEF & \overline{m}[\lrangle{Z}(Z\lrangle{n})].\enc{Q} &  \\ 
\end{array}
\]
The encoding above uses both name parameterization and process parameterization. 
Typically one can assume that $Y$ and $Z$ are fresh for simplicity, but this is not essential, because these variables are bound and can be $\alpha$-converted whenever necessary, and moreover the encoded first-order process does not have higher-order variables.
Specifically, the encoding of an output `transmits' the name to be sent (i.e., $n$) in terms of a process parameterization (i.e., $\lrangle{Z}(Z\lrangle{n})$) that, once being received by the encoding of an input, is instantiated by a name-parameterized term (i.e., $\lrangle{x}\enc{P}$), which then can apply $n$ on $x$ in the encoding of $P$, thus fulfilling `name-passing'. Below we give an example. Suppose $P\DEF (c)(a(x).\overline{x}c.P_1)$ and $Q\DEF (d)(\overline{a}d.d(y).Q_1)$. So
\[
\begin{array}{lcl}
P\para Q &\st{\tau}& (d)((c)(\overline{d}c.P_1\fosub{d}{x})\para d(y).Q_1) \\
&\st{\tau}& (dc)(P_1\fosub{d}{x}\para Q_1\fosub{c}{y})
\end{array}
\] The encoding and interactions of $\enc{P\para Q}$ are as below. For clarity, we use \textbf{bold font} to indicate the evolving part during a communication.
\[
\begin{array}{lcl}
\enc{P\para Q} &\equiv& (c)(a(Y).Y\lrangle{\lrangle{x}\enc{\overline{x}c.P_1}}) \,\para\, (d)(\overline{a}[\lrangle{Z}(Z\lrangle{d})].\enc{d(y).Q_1}) \\
 &\st{\tau}& (d)\big((c)(\bm{(\lrangle{Z}(Z\lrangle{d}))\lrangle{\lrangle{x}\enc{\overline{x}c.P_1}}}) \,\para\, \enc{d(y).Q_1} \big) \\
 &\equiv& (d)\big((c)( \bm{\enc{\overline{x}c.P_1}\fosub{d}{x}}) \,\para\, \enc{d(y).Q_1} \big) \\
 &\equiv& (d)\big((c)( \bm{ (\overline{x}[\lrangle{Z}(Z\lrangle{c})].\enc{P_1}) \fosub{d}{x}}) \,\para\, d(Y).Y\lrangle{\lrangle{y}\enc{Q_1}}  \big) \\
 &\equiv& (d)\big((c)( \bm{ (\overline{d}[\lrangle{Z}(Z\lrangle{c})].\enc{P_1}\fosub{d}{x}) }) \,\para\, d(Y).Y\lrangle{\lrangle{y}\enc{Q_1}}  \big) \\
 &\st{\tau}&  (dc)\big(\enc{P_1}\fosub{d}{x} \,\para\, \bm{(\lrangle{Z}(Z\lrangle{c}))(\lrangle{\lrangle{y}\enc{Q_1}})} \big) \\
 &\equiv&  (dc)\big(\enc{P_1}\fosub{d}{x} \,\para\, \enc{Q_1}\fosub{c}{y} \big) \\
 &\equiv&  (dc)\big(\enc{P_1\fosub{d}{x}} \,\para\, \enc{Q_1\fosub{c}{y}} \big)
\end{array}
\]

Apparently the encoding is compositional, preserves the (free) names, and moreover divergence-reflecting (since the encoding does not introduce any extra internal action), as stated in the follow-up lemma whose proof is a standard induction. 
\begin{lemma}
Assume $P$ is a \FOPi\ process. The encoding above from \FOPi\ to \HOPiDd\ is compositional and divergence-reflecting; moreover $\enc{P}\fosub{n}{m} \equiv \enc{P\fosub{n}{m}}$.
\end{lemma}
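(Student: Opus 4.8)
The plan is to prove the three assertions separately, each by straightforward structural induction on the \FOPi\ process $P$, since all three are essentially bookkeeping properties of a compositional translation. The statement has three parts: (i) compositionality, (ii) divergence-reflection, and (iii) the substitution-commutation law $\enc{P}\fosub{n}{m} \equiv \enc{P\fosub{n}{m}}$. I would dispatch compositionality almost immediately by inspection: the encoding is defined homomorphically on $0$, restriction, and parallel composition, and the two genuinely nontrivial clauses (input prefix and output prefix) exhibit $\enc{m(x).P}$ and $\enc{\overline{m}n.Q}$ as fixed contexts $C_{m(x).\_}[\enc{P}]$ and $C_{\overline{m}n.\_}[\enc{Q}]$ built around the recursive calls; the replication clause $\enc{!m(x).P}\DEF{}!\enc{m(x).P}$ is likewise a context applied to $\enc{m(x).P}$. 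Matching this against the compositionality criterion in Definition~\ref{gorla-like-cond} is then immediate once one names the contexts $C_{op}$.

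For divergence-reflection I would argue the contrapositive via an operational-correspondence observation: the encoding introduces no $\tau$-transitions that are not already present in the source. The example computation in the excerpt illustrates the key phenomenon, namely that each source communication $\st{\tau}$ is matched by exactly one target $\st{\tau}$, while the application/abstraction steps (instantiations such as $(\lrangle{Z}(Z\lrangle{d}))\lrangle{\lrangle{x}\enc{\overline{x}c.P_1}} \equiv \enc{\overline{x}c.P_1}\fosub{d}{x}$) are handled by structural congruence $\equiv$ and contribute no internal actions. The clean way to phrase this is to establish a simulation in the reverse direction: every infinite $\tau$-sequence of $\enc{P}$ projects onto an infinite $\tau$-sequence of $P$, because the $\beta$-style reductions are absorbed into $\equiv$ and each communication redex in $\enc{P}$ corresponds to a communication redex in $P$. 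Hence if $\enc{P}$ diverges then $P$ diverges.

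For the substitution law I would induct on the structure of $P$, with the interesting cases being the two prefixes, and I would be careful about the relationship between the first-order substitution $\fosub{n}{m}$ and the name-parameterization binders $\lrangle{x}$ and the fresh process variables $Y,Z$. In the input case, $\enc{m(x).P}\fosub{n}{m} = (m(Y).Y\lrangle{\lrangle{x}\enc{P}})\fosub{n}{m}$; assuming (by $\alpha$-conversion) that $m \neq x$ and that $Y$ is untouched by the substitution, this pushes inside to give $n(Y).Y\lrangle{\lrangle{x}(\enc{P}\fosub{n}{m})}$, and the inductive hypothesis $\enc{P}\fosub{n}{m}\equiv\enc{P\fosub{n}{m}}$ rewrites the body to yield exactly $\enc{(m(x).P)\fosub{n}{m}}$ up to $\equiv$. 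The output case is analogous, where $n$ occurs inside the transmitted abstraction $\lrangle{Z}(Z\lrangle{n})$ and is the payload being substituted. I expect the main obstacle, such as it is, to be discharging the binding side-conditions cleanly: one must ensure that $x$ (in input) and the bound $n$-variable interactions do not clash with the substituted name, and that the fresh higher-order variables $Y,Z$ genuinely play no role under a first-order substitution. This is where I would invoke the remark already made in the excerpt that $Y,Z$ are bound and freely $\alpha$-convertible and that the encoded process carries no higher-order free variables, so no capture can occur; with those conventions in place the induction is routine and $\equiv$ (rather than syntactic equality) suffices precisely to absorb the application steps.
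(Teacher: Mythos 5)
Your proposal is correct and follows essentially the same route as the paper, which dispatches this lemma as ``a standard induction,'' justifying divergence-reflection by exactly your observation that the encoding introduces no extra internal actions (applications being absorbed into structural congruence $\equiv$) and treating compositionality and the substitution law as routine consequences of the homomorphic clauses. Your additional care about the binders $\lrangle{x}$, $Y$, $Z$ and $\alpha$-conversion in the substitution case is consistent with the paper's remark that these variables are bound and that encoded first-order processes carry no free higher-order variables.
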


\tdup{
\bc{TODO (across \HOPid\ and \FOPi):}
\begin{itemize}
\item \bc{Operational correspondence}. \bc{$\checkmark$}

\item \bc{Soundness/weak adequacy}.  \bc{$\checkmark$}
(Ground bisimilarity: $\WGB$; Strong ground bisimilarity: $\SGB$; Context bisimilarity: $\WCB$; Strong context bisimilarity: $\SCB$)
\[
P\WGB Q \mbox{~~ implies ~~} \enc{P} \WCB \enc{Q}
\]
\stress{(Possibly use the result in Section \ref{s:normal} on normal bisimulation.)}
\end{itemize}
}

\subsection{Operational correspondence}
We have the following properties clarifying the correspondence of actions before and after the encoding.
To delineate some case of the operational correspondence in terms of certain special input, i.e., a trigger, we define $\triggerD \DEF \lrangle{Z}\overline{m}Z$ in which $m$ is assumed to be fresh (it will also be used in Section \ref{s:normal}, but here simply allows for more flexible characterization of the operational correspondence). We note that sometimes existential quantification is omitted when it is clear from context.
\begin{lemma}\label{l:opcor}
Suppose $P$ is a \FOPi\ process.
(1) If $P \st{a(b)} P'$, then $\enc{P} \st{a(\lrangle{Z}(Z\lrangle{b}))} T$ and $T\SCB \enc{P'}$; ~
\tdup{
\item (\rc{useful?seems not! to remove!}) If $P \st{a(b)} P'$, then for some fresh $m$, $\enc{P} \st{a(\lrangle{Z}\overline{m}[Z\lrangle{b}])} T$ and $(m)(T \para !m(Y).Y) \WCB \enc{P'}$;
}
(2) If $P \st{a(b)} P'$, then $\enc{P} \st{a(\triggerD)} T$ and $(m)(T \para !m(Y).Y\lrangle{b}) \WCB \enc{P'}$; ~
(3) If $P \st{\overline{a}b} P'$, then $\enc{P} \st{\overline{a}[\lrangle{Z}(Z\lrangle{b})]} T$ and $T\SCB \enc{P'}$; ~
(4) If $P \st{\overline{a}(b)} P'$, then $\enc{P} \st{(b)\overline{a}[\lrangle{Z}(Z\lrangle{b})]} T$ and $T\SCB \enc{P'}$; ~
(5) If $P \st{\tau} P'$, then $\enc{P} \st{\tau} T$ and $T\SCB \enc{P'}$.
\end{lemma}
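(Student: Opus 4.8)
The plan is to prove each of the five clauses by the same basic recipe: run the LTS of the source process, push the transition through the (compositional) encoding, and then reconcile the resulting \HOPiDd\ term with the encoding of the continuation using the structural-congruence law for application together with the substitution lemma $\enc{P}\fosub{n}{m}\equiv\enc{P\fosub{n}{m}}$. The cases split naturally according to the shape of the source action $\alpha$. Clauses (1), (3), (4) are the ``tight'' cases where the encoded term is strongly context bisimilar to $\enc{P'}$; clause (5) is the internal-action case; and clause (2) is the auxiliary ``trigger'' case that diverges from the homomorphic pattern because it uses $\triggerD\DEF\lrangle{Z}\overline{m}Z$ as the received value rather than the canonical $\lrangle{Z}(Z\lrangle{b})$.

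First I would treat the visible-action clauses (1), (3), (4). For (3), if $P\st{\overline{a}b}P'$ then $P$ has the form (up to the LTS rules) an output prefix $\overline{a}b.Q$ in an appropriate context. By the definition $\enc{\overline{m}n.Q}\DEF\overline{m}[\lrangle{Z}(Z\lrangle{n})].\enc{Q}$ and compositionality of the encoding on the surrounding operators, $\enc{P}$ exhibits the action $\overline{a}[\lrangle{Z}(Z\lrangle{b})]$ landing in a term $T$ which is syntactically $\enc{P'}$ up to $\equiv$; hence $T\SCB\enc{P'}$ since $\equiv$ is contained in $\SCB$. Clause (4) is the bound-output variant: the derived form $\overline{a}(b).P\DEF(b)\overline{a}b.P$ means the local name $b$ is extruded, and the higher-order output rule for $(d)T$ carries $b$ out exactly when $b\in fn(\lrangle{Z}(Z\lrangle{b}))$, which it is; so the action is $(b)\overline{a}[\lrangle{Z}(Z\lrangle{b})]$ and the continuation again matches $\enc{P'}$ up to $\equiv$. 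Clause (1) is the input case: $\enc{m(x).P}\DEF m(Y).Y\lrangle{\lrangle{x}\enc{P}}$, so receiving the value $\lrangle{Z}(Z\lrangle{b})$ triggers the input transition into $T\equiv(\lrangle{Z}(Z\lrangle{b}))\lrangle{\lrangle{x}\enc{P}}$; applying the structural-congruence application law twice (first substituting $\lrangle{x}\enc{P}$ for $Z$, then substituting $b$ for $x$) yields $\enc{P}\fosub{b}{x}\equiv\enc{P\fosub{b}{x}}=\enc{P'}$ by the substitution lemma, giving $T\SCB\enc{P'}$.

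Clause (5), the $\tau$ case, I would handle by induction on the derivation of $P\st{\tau}P'$, the base case being a synchronization between a matching input and output. The worked example in the excerpt already exhibits the full chain of $\equiv$-rewrites for such a synchronization: the communicated process-abstraction $\lrangle{Z}(Z\lrangle{n})$ is received into the input body, the two application laws fire, and the result collapses (using the substitution lemma) to $\enc{P_1\fosub{n}{x}}\para\enc{Q_1}$ with the appropriate restrictions; so a single $\tau$ on the source maps to a single $\tau$ on the encoding with $T\equiv\enc{P'}$, whence $T\SCB\enc{P'}$. The inductive steps for restriction and parallel composition are routine because the encoding is homomorphic on those operators and $\SCB$ is a congruence.

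The main obstacle is clause (2). Here the received value is the trigger $\triggerD=\lrangle{Z}\overline{m}Z$ rather than the instantiating abstraction, so instantiating the input body $Y\lrangle{\lrangle{x}\enc{P}}$ produces $\overline{m}[\lrangle{x}\enc{P}]$ instead of directly performing the name-application; the stored abstraction must then be re-fetched through the fresh trigger channel $m$. The claim is only up to weak context bisimilarity $\WCB$ (not $\SCB$), and the restriction $(m)(\cdots\para !m(Y).Y\lrangle{b})$ must be shown to behave, for all receiving contexts, like $\enc{P'}=\enc{P\fosub{b}{x}}$. The crux is essentially a factorization-style argument: the replicated resource $!m(Y).Y\lrangle{b}$ on the private channel $m$ acts as a repository that, whenever the body emits $\overline{m}[\lrangle{x}\enc{P}]$, supplies back $(\lrangle{x}\enc{P})\lrangle{b}\equiv\enc{P}\fosub{b}{x}\equiv\enc{P\fosub{b}{x}}$, matching the direct instantiation up to the extra (reflected, non-divergent) $\tau$ steps that $\WCB$ absorbs. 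I would prove this by exhibiting an explicit context bisimulation up to $\equiv$ relating $(m)(T\para !m(Y).Y\lrangle{b})$ with $\enc{P'}$, checking that the freshness of $m$ prevents the environment from interfering with the trigger traffic and that every trigger-fetch is matched on the other side by reflexive (zero-step) moves; the delicacy lies in verifying the output clause of context bisimulation, since one must show the transmitted and residual processes remain related under every receiving environment $E[X]$ despite the detour through $m$.
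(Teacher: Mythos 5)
Your proposal is correct and follows essentially the same route as the paper's (appendix) proof: induction on the derivation of the source transition, pushing the action through the homomorphic encoding, collapsing the resulting applications via the structural-congruence law and the substitution property $\enc{P}\fosub{n}{m}\equiv\enc{P\fosub{n}{m}}$ to get $T\equiv\enc{P'}$ (hence $T\SCB\enc{P'}$) in clauses (1), (3)--(5), and for clause (2) discharging the detour through the fresh trigger channel $m$ by one reflected $\tau$-step plus garbage collection of $(m)\,!m(Y).Y\lrangle{b}$, which is exactly why only $\WCB$ is claimed there. No gaps worth flagging.
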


The converse is as below.
\begin{lemma}\label{l:opcor-conv}
Suppose $P$ is a \FOPi\ process.
(1) If $\enc{P} \st{a(\lrangle{Z}(Z\lrangle{b}))} T$, then $P \st{a(b)} P'$ and $T\SCB \enc{P'}$; ~
\tdup{
\item (\rc{useful? seems not! to remove!}) If for some fresh $m$, $\enc{P} \st{a(\lrangle{Z}\overline{m}[Z\lrangle{b}])} T$, then $P \st{a(b)} P'$ and $(m)(T \para !m(Y).Y) \WCB \enc{P'}$;
}
(2) If $\enc{P} \st{a(\triggerD)} T$, then $P \st{a(b)} P'$ and $(m)(T \para !m(Y).Y\lrangle{b}) \WCB \enc{P'}$; ~
(3) If $\enc{P} \st{\overline{a}[\lrangle{Z}(Z\lrangle{b})]} T$, then $P \st{\overline{a}b} P'$ and $T\SCB \enc{P'}$; ~
(4) If $\enc{P} \st{(b)\overline{a}[\lrangle{Z}(Z\lrangle{b})]} T$, then $P \st{\overline{a}(b)} P'$ and $T\SCB \enc{P'}$; ~
(5) If $\enc{P} \st{\tau} T$, then $P \st{\tau} P'$ and $T\SCB \enc{P'}$.
\end{lemma}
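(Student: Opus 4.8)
The plan is to prove all five clauses simultaneously by structural induction on $P$, exploiting the fact that the encoding is compositional and introduces no silent plumbing steps, so that every transition of $\enc{P}$ must factor through the encoding of a single source prefix. The governing observation (a straightforward ``shape'' analysis of the encoding) is that the only input-capable subterms occurring in $\enc{P}$ are images $\enc{a(x).P_1}=a(Y).Y\lrangle{\lrangle{x}\enc{P_1}}$ of input prefixes (and their replicated variants), and the only output-capable subterms are images $\enc{\overline{a}n.Q_1}=\overline{a}[\lrangle{Z}(Z\lrangle{n})].\enc{Q_1}$ of output prefixes; the parameterization redexes introduced by the encoding are resolved purely by the structural-congruence rule for application and never by a labelled or $\tau$ transition. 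Consequently no spurious transition is possible, and the label of any genuine transition is forced into exactly one of the shapes listed in the statement. Throughout I use that $\equiv\,\subseteq\,\SCB$, so that any $T\equiv\enc{P'}$ immediately yields $T\SCB\enc{P'}$.

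First I would treat the visible cases (1), (3), (4). For (1), an input of $\enc{P}$ on $a$ must come from a subterm $\enc{a(x).P_1}$, which fires as $\enc{a(x).P_1}\st{a(A)}A\lrangle{\lrangle{x}\enc{P_1}}$; instantiating $A=\lrangle{Z}(Z\lrangle{b})$ and applying the congruence rule for application twice gives $A\lrangle{\lrangle{x}\enc{P_1}}\equiv(\lrangle{x}\enc{P_1})\lrangle{b}\equiv\enc{P_1}\hosub{b}{x}$, which by the substitution clause of the preceding lemma equals $\enc{P_1\fosub{b}{x}}$. The corresponding source transition is $P\st{a(b)}P'$ with $P'=P_1\fosub{b}{x}$ (in context), so $T\equiv\enc{P'}$. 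Cases (3) and (4) are more direct: the output rule gives $\enc{\overline{a}n.Q_1}\st{\overline{a}[\lrangle{Z}(Z\lrangle{n})]}\enc{Q_1}$ immediately, matching the source output $\overline{a}n$; since the encoding is homomorphic on restriction, the scope-extrusion side condition of the bound-output rule in \HOPiDd\ coincides with that of \FOPi, so (4) follows by the same reasoning with $b$ extruded.

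Next I would handle the $\tau$ case (5). A $\tau$ transition of $\enc{P}$ can only arise from a communication between an input-origin subterm $\enc{a(x).P_1}$ and an output-origin subterm $\enc{\overline{a}n.Q_1}$ sitting in parallel within the structure of $P$; by compositionality $P$ itself has the matching parallel shape and fires the source communication rule, with residual $P_1\fosub{n}{x}$ in parallel with $Q_1$. Resolving the two application redexes by $\equiv$ (no additional $\tau$ is incurred) and invoking the substitution lemma as in (1) yields $T\equiv\enc{P'}$, whence $T\SCB\enc{P'}$. The inductive step across parallel composition and restriction is routine because those operators are preserved verbatim by the encoding.

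Finally, the trigger case (2) is the one requiring the most care, and I expect it to be the main obstacle. Here the input receives $\triggerD=\lrangle{Z}\overline{m}Z$, so the input-origin subterm, after firing, reduces by $\equiv$ to $\overline{m}[\lrangle{x}\enc{P_1}]$ rather than to $\enc{P'}$ directly; thus $T\equiv C[\overline{m}[\lrangle{x}\enc{P_1}]]$ for the appropriate context $C$. The point is then to show $(m)(T\para\,!m(Y).Y\lrangle{b})\WCB\enc{P'}$, which is exactly a trigger/factorization argument: the single communication on the fresh channel $m$ supplies $b$, beta-reduces $(\lrangle{x}\enc{P_1})\lrangle{b}$ to $\enc{P_1}\hosub{b}{x}\equiv\enc{P_1\fosub{b}{x}}=\enc{P'}$, and—because $m$ is fresh and restricted—the leftover replicated receiver $!m(Y).Y\lrangle{b}$ is unreachable and may be garbage-collected. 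The delicate part is justifying this collection up to $\WCB$ (rather than $\equiv$): one exhibits the evident weak bisimulation relating $(m)(C[\overline{m}[\lrangle{x}\enc{P_1}]]\para\,!m(Y).Y\lrangle{b})$ to $\enc{P'}$, using that $\WCB$ is a congruence and that a fresh restricted trigger contributes no observable behaviour. This is precisely where the weak equivalence, and the trigger machinery $\triggerD$, enter; the other four clauses need only strong equivalence and $\equiv$.
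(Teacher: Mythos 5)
Your proof is correct and follows essentially the same route as the paper's: a case analysis on the structure of $P$ (equivalently, on the derivation of the transition of $\enc{P}$), using the fact that every transition of the encoding must originate from an encoded prefix, the substitution property $\enc{P}\fosub{n}{m}\equiv\enc{P\fosub{n}{m}}$, the inclusion $\equiv\;\subseteq\;\SCB$ for clauses (1), (3)--(5), and a garbage-collection argument up to $\WCB$ for the restricted trigger in clause (2). The only point worth tightening is in clause (2): after the communication on $m$, the residual $\enc{P_1}\fosub{b}{x}$ appears in parallel at top level rather than in the hole of $C$, so you should observe that the hole of $C$ is unguarded (hence lies under restrictions and parallel compositions only), which together with the close rule makes this repositioning harmless up to $\equiv$ before the final $\WCB$ step that discards $!m(Y).Y\lrangle{b}$.
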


Lemma \ref{l:opcor} and Lemma \ref{l:opcor-conv} can be proven in a similar fashion 
\iftoggle{appendixing}{%
 (we give details in Appendix \ref{a:proofs-encoding}),
}{%
 (details can be found in \cite{Xu16app}),
}
and moreover be lifted to the weak situation. That is, if one replaces strong transitions (single arrows) with weak transitions (double arrows), 
the results still hold ($\SCB$ retains because the encoding does not bring any extra internal action); see \cite{San92,SW01a} for a reference.
We will however simply refer to these two lemmas in related discussions.

\subsection{Soundness}
In this section, we discuss the soundness of the encoding.
First of all, it is unfortunate that the soundness of the encoding is not true. To see this, take the processes $R_1$ and $R_2$ below. We recall that the CCS-like prefixes are defined as usual, i.e., $a.P\DEF a(x).P$ ($x\notin \n{P}$), $\overline{a}.P\DEF (c)\overline{a}c.P$ ($c\notin \n{P}$); sometimes we trim the trailing $0$, e.g., $a$ stands for $a.0$ and $\overline{a}$ for $\overline{a}.0$.
\[
\begin{array}{lcllcl}
R_1 &\DEF& (b)(a.\overline{b} \para b.\overline{c}) &\qquad\quad R_2 &\DEF& (b)(a.\overline{b} \para b.\overline{c} \para b.\overline{c})
\end{array}
\]
Obviously, $R_1$ and $R_2$ are ground bisimilar.
Now we examine their encodings. 
\[
\begin{array}{lcl}
\enc{R_1} &\equiv& (b)(a(Y).Y\lrangle{\lrangle{x}\enc{\overline{b}}} \para b(Y).Y\lrangle{\lrangle{x}\enc{\overline{c}}}) \\
\enc{R_2} &\equiv& (b)(a(Y).Y\lrangle{\lrangle{x}\enc{\overline{b}}} \para b(Y).Y\lrangle{\lrangle{x}\enc{\overline{c}}} \para b(Y).Y\lrangle{\lrangle{x}\enc{\overline{c}}})
\end{array}
\]

We show that $\enc{R_1}$ and $\enc{R_2}$ are not context bisimilar. Define $T\DEF (m)(\overline{a}[\lrangle{Z}\overline{m}Z] \para m(X).(X\lrangle{d} \para X\lrangle{d})$.
Then $(a)(\enc{R_1}\para T)$ and $(a)(\enc{R_2}\para T)$ can be distinguished. The latter can fire two output on $c$, whereas the former cannot, as shown below.
\[
\begin{array}{lrl}
 &(a)(\enc{R_1}\para T) \quad \st{\tau}\SCB& (m)((b)(\overline{m}[\lrangle{x}\enc{\overline{b}}] \para b(Y).Y\lrangle{\lrangle{x}\enc{\overline{c}}}) \para m(X).(X\lrangle{d} \para X\lrangle{d})) \\
&\st{\tau}\SCB& (b)(b(Y).Y\lrangle{\lrangle{x}\enc{\overline{c}}} \para \enc{\overline{b}} \para \enc{\overline{b}}) \\
&\equiv& (b)(b(Y).Y\lrangle{\lrangle{x}\enc{\overline{c}}} \para (e)\overline{b}[\lrangle{Z}(Z\lrangle{e})] \para \enc{\overline{b}}) \\
&\st{\tau}\SCB& (b)(\enc{\overline{c}} \para \enc{\overline{b}}) \\
&\equiv& (b)((f)\overline{c}[\lrangle{Z}(Z\lrangle{f})] \para \enc{\overline{b}}) \\
&\st{(f)\overline{c}[\lrangle{Z}(Z\lrangle{f})]}\SCB& 0 
\end{array}
\]
\[
\begin{array}{lrl}
 &(a)(\enc{R_2}\para T) \quad \st{\tau}\SCB& (m)((b)(\overline{m}[\lrangle{x}\enc{\overline{b}}] \para b(Y).Y\lrangle{\lrangle{x}\enc{\overline{c}}} \para b(Y).Y\lrangle{\lrangle{x}\enc{\overline{c}}}) \para m(X).(X\lrangle{d} \para X\lrangle{d})) \\
&\st{\tau}\SCB& (b)(b(Y).Y\lrangle{\lrangle{x}\enc{\overline{c}}}\para b(Y).Y\lrangle{\lrangle{x}\enc{\overline{c}}} \para \enc{\overline{b}} \para \enc{\overline{b}}) \\
&\equiv& (b)(b(Y).Y\lrangle{\lrangle{x}\enc{\overline{c}}}\para b(Y).Y\lrangle{\lrangle{x}\enc{\overline{c}}} \para (e)\overline{b}[\lrangle{Z}(Z\lrangle{e})] \para (e)\overline{b}[\lrangle{Z}(Z\lrangle{e})]) \\
&\st{\tau}\st{\tau}\SCB& \enc{\overline{c}} \para \enc{\overline{c}} \\
&\equiv& (f)\overline{c}[\lrangle{Z}(Z\lrangle{f})] \para (f)\overline{c}[\lrangle{Z}(Z\lrangle{f})] \\
&\st{(f)\overline{c}[\lrangle{Z}(Z\lrangle{f})]}\SCB& (f)\overline{c}[\lrangle{Z}(Z\lrangle{f})] \\
&\st{(f)\overline{c}[\lrangle{Z}(Z\lrangle{f})]}\SCB& 0
\end{array}
\]

Intuitively, the reason general soundness does not hold is that context bisimulation is somewhat more discriminating in the target higher-order calculus, which can have more flexibility when dealing with blocks of processes in presence of parameterization (e.g., some subprocess can be sent as needed). This is however beyond the capability of a first-order process.

In spite of the falsity of soundness in general, we can have a somewhat weaker yet still sensible soundness. Remember that our main goal is to achieve first-order concurrency in the higher-order model, so maybe we do not need to be so demanding when coping with the encodings of  first-order processes, that is, when testing an encoding process with an input, one can focus on those representing a name instead of a general one. Then it is expected that soundness will hold under this assumption. Fortunately, this is indeed true.



We have the following lemma stating the weak soundness of the encoding. Recall that $\WWCB$ is the $\WCB$ restricted to the image of the encoding (i.e., the processes in the target model that have reverse-image w.r.t. the encoding).
\begin{lemma}\label{l:soundness}
Suppose $P$ is a \FOPi\ process. Then $P\WGB Q$ implies $\enc{P} \,\WWCB\, \enc{Q}$.
\end{lemma}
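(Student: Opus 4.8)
The plan is to exhibit a single relation that is a context bisimulation in the restricted sense of $\WWCB$ and contains every pair $(\enc{P},\enc{Q})$ with $P\WGB Q$, and to drive the whole bisimulation game through the operational correspondence results (Lemma \ref{l:opcor} and Lemma \ref{l:opcor-conv}, in their weak form). Concretely I would take
\[
\mathcal{R} \;\DEF\; \{(T_1,T_2) \mid \exists\, P,Q \in \FOPi,\; P\WGB Q,\; T_1\SCB\enc{P},\; T_2\SCB\enc{Q}\}.
\]
Pre- and post-composing with $\SCB$ is what lets us absorb the residuals produced by the operational correspondence lemmas, which are only guaranteed to be $\SCB$-equivalent to an encoding rather than syntactically an encoding; since $\SCB$ is a strong congruence and $\SCB\subseteq\WCB$, any transition of $T_1$ can be transported along $\SCB$ to the genuine encoding $\enc{P}$, analysed there, and the matching residual transported back to $T_2$. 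So the core of the argument is to play the game on $\enc{P}$ versus $\enc{Q}$ and show the residuals return to $\mathcal{R}$.

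For the $\tau$ and input clauses the reasoning is uniform. If $\enc{P}\st{\tau}T$, then Lemma \ref{l:opcor-conv}(5) gives $P\st{\tau}P'$ with $T\SCB\enc{P'}$; since $P\WGB Q$ there is $Q\wt{}Q'$ with $P'\WGB Q'$, and the weak form of Lemma \ref{l:opcor}(5) lifts this to $\enc{Q}\wt{}T'$ with $T'\SCB\enc{Q'}$, placing $(T,T')$ back in $\mathcal{R}$. The input clause is the crucial point where the dot in $\WWCB$ is used: under $\WWCB$ an input is probed only with a name-representing value $\lrangle{Z}(Z\lrangle{b})$, and Lemma \ref{l:opcor-conv}(1) together with Lemma \ref{l:opcor}(1) show that such an input corresponds exactly to a first-order input $a(b)$, with residuals again $\SCB$ to encodings of ground-bisimilar continuations. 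This is precisely the correspondence that fails for a general higher-order input, as the counterexample $R_1,R_2$ shows, and is why only the weaker $\WWCB$ is attainable.

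The output clause is where the argument really has to work, and I expect it to be the main obstacle. If $\enc{P}\st{(\ve{c})\overline{a}A}T$, then Lemma \ref{l:opcor-conv}(3,4) force $A=\lrangle{Z}(Z\lrangle{b})$ and $T\SCB\enc{P'}$ with $P\st{\overline{a}b}P'$ (or the bound-output variant $P\st{\overline{a}(b)}P'$). Because ground bisimilarity matches an output on $a$ with the \emph{identical} emitted name $b$ (up to $\alpha$-conversion in the bound case), the weak form of Lemma \ref{l:opcor}(3,4) yields $\enc{Q}\wt{(\ve{d})\overline{a}B}T'$ with $B=\lrangle{Z}(Z\lrangle{b})=A$ and $T'\SCB\enc{Q'}$, where $P'\WGB Q'$. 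The key simplification is thus that the two sides emit the same value $A=B$. It then remains to show, for every admissible receiving context $E[X]$, that $(\ve{c})(E[A]\para T)$ and $(\ve{d})(E[A]\para T')$ lie again in $\mathcal{R}$. Here I would use that, under $\WWCB$, these $E$ range over (images of) first-order receivers, so $E[A]$ consumes the name-representing $A$ exactly as a first-order receiver consumes the name $b$; since the encoding is homomorphic on parallel composition and restriction, $(\ve{c})(E[A]\para\enc{P'})$ coincides, up to $\equiv$ and $\SCB$, with $\enc{(\ve{c})(C[b]\para P')}$ for the first-order context $C$ corresponding to $E$. Congruence of $\WGB$ then gives $(\ve{c})(C[b]\para P')\WGB(\ve{c})(C[b]\para Q')$, returning the composites to $\mathcal{R}$.

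The delicate step, on which I would spend most care, is making the correspondence between admissible receiving contexts $E[X]$ of the restricted equivalence and first-order receiving contexts $C$ precise, and verifying $(\ve{c})(E[A]\para\enc{P'})\SCB\enc{(\ve{c})(C[b]\para P')}$; this is exactly the content hidden in ``restricted to the image'', and it is what keeps the composites inside the encoded fragment rather than exposing them to the extra discriminating power of arbitrary higher-order contexts. Everything else is a bookkeeping induction driven by the operational correspondence lemmas and the congruence of $\WGB$ and $\SCB$. I would conclude by noting that $\mathcal{R}$ is symmetric, since $\WGB$ and $\SCB$ are, and contains $(\enc{P},\enc{Q})$ whenever $P\WGB Q$, whence $\enc{P}\WWCB\enc{Q}$.
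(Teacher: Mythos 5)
Your overall skeleton --- a candidate relation built from $\{(\enc{P},\enc{Q})\mid P\WGB Q\}$, closed under $\SCB$, with the bisimulation game driven entirely by Lemmas \ref{l:opcor} and \ref{l:opcor-conv} and with the input clause probed only by the name-representing value $\lrangle{Z}(Z\lrangle{b})$ --- is exactly the paper's (the paper phrases the $\SCB$-closure as an ``up-to $\SCB$'' step rather than baking it into $\mathcal{R}$, and adds $\WWCB$ itself to the candidate, but these are cosmetic). The $\tau$ case, the input case, and the derivation of a matching output with identical emitted value $B=A$ are argued identically.

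The divergence, and the genuine gap, is in how you discharge the receiving context $E[X]$ in the output clause. You propose to translate $E$ into a first-order receiver $C$ with $(\ve{c})(E[A]\para\enc{P'})\SCB\enc{(\ve{c})(C[b]\para P')}$ and then invoke congruence of $\WGB$ on the source side. This step is unproved and does not go through as stated: the output clause of Definition \ref{context-bisimulation} quantifies over \emph{all} contexts $E[X]$ of \HOPiDd\ (and the paper's own proof treats it that way), and contexts such as $E[X]=\overline{n}X.0$ or $E[X]=X\para X$ are not, up to $\equiv$ or $\SCB$, images of any first-order receiver; even under the most charitable ``restricted contexts'' reading of $\WWCB$, a context is not a process, so ``image of the encoding'' gives no characterization of the admissible $E$ for you to work with. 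The paper avoids all of this by running $\mathcal{R}$ as a context bisimulation \emph{up to context and $\SCB$}: since both sides emit the same value $A=B=\lrangle{Z}(Z\lrangle{b})$, the residual pair is literally $C[\enc{P'}]$ versus $C[\enc{Q'}]$ for the single common context $C\DEF(\ve{c})([\cdot]\para E[A])$, and up-to-context cancellation closes the case for every $E$ without ever asking what $E$ encodes. If you replace your translation-to-$C$ step by this up-to-context cancellation (and appeal to soundness of the up-to technique), your argument becomes the paper's.
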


\tdup{
$\myxcancel{
\fbox{
\begin{minipage}{8cm}
\rc{Use normal bisimulation for \HOPiDd\ to prove soundness, and original context bisimulation for completeness (?); Notice normal bisimulation for \HOPiDd\ inherits that for \HOPiD\ (type of input can be name-parameterization or process-parameterization.)}
\end{minipage}
}}$
}

\begin{proof}
\tdup{
\stress{ \scriptsize DONE! $\bcancel{\mbox{TODO: ,}}$
to deal with input and output, USE ``up-to context" technique (\cite{SW01a}, page 80-92; to confirm that it can be extended to higher-order paradigm (e.g., the case a context hole appears beneath an input or name-abstraction (seems ok)) !!; maybe also notice (e.g.) \cite{BPPR15}). }

\stress{ \scriptsize NOTICE that in input/output bisimulation (to prove $\enc{P}\WNB \enc{Q}$ or $\enc{P}\WCB \enc{Q}$):
\begin{itemize}
\item (by going through the FO processes $P,Q$) $\enc{P}\st{a(\triggerD)}$ must be able to be matched by $\enc{Q}\st{a(\triggerD)}$;
\item (by going through the FO processes $P,Q$)  $\enc{P}\st{\overline{a}[\lrangle{Z}(Z\lrangle{b})]}$ must be able to be matched by $\enc{Q}\st{\overline{a}[\lrangle{Z}(Z\lrangle{b})]}$.
\end{itemize}}
}


We show that $\mathcal{R}\DEF \{(\enc{P},\enc{Q}) \,|\, P\WGB Q\} \cup \WWCB$ is a context bisimulation up-to context and $\SCB$ (we refer the reader to, for example,  \cite{SW01a,BPPR15} and the references therein for the up-to proof technique for establishing bisimulations; we note that using $\SCB$ here is sufficient since it is stronger than $\WSCB$, i.e., $\SCB$ restricted to the image of the encoding). 

Suppose $\enc{P}\,\mathcal{R}\, \enc{Q}$. There are several cases, where  Lemma \ref{l:opcor} and Lemma \ref{l:opcor-conv} 
play an important part.
\begin{itemize}
\item $\enc{P}\wt{a(\lrangle{Z}(Z\lrangle{b}))} T$.
By Lemma \ref{l:opcor-conv}, $P \wt{a(b)} P'$ and $T\SCB \enc{P'}$. Because $P\WGB Q$, we know that $Q \wt{a(b)} Q'$ ~ $\WGB P'$ and thus $\enc{P'} \,\mathcal{R}\, \enc{Q'}$. Then by Lemma \ref{l:opcor}, $\enc{Q} \wt{a(\lrangle{Z}(Z\lrangle{b}))} T'$ and $T'\SCB \enc{Q'}$. So we have $T \SCB \enc{P'} \,\mathcal{R}\, \enc{Q'} \SCB T'$.

\tdup{
$\myxcancel{
\fbox{
\begin{minipage}{14cm}
$\enc{P}\wt{a(\triggerD)} T$. \stress{todo ~~ (\& see NOTICE just above: use Lemma \ref{l:opcor},~\ref{l:opcor-conv}(3) and up-to context}) \\
By Lemma \ref{l:opcor-conv}, $P \wt{a(b)} P'$ and $(m)(T \para !m(Y).Y\lrangle{b}) \WCB \enc{P'}$. Because $P\WGB Q$, we know that $Q \wt{a(b)} Q' \WGB P'$ and thus $\enc{P'} \,\mathcal{R}\, \enc{Q'}$. Then by Lemma \ref{l:opcor}, $\enc{Q} \wt{a(\triggerD)} T'$ and $(m)(T' \para !m(Y).Y\lrangle{b}) \WCB \enc{Q'}$. So  {\large \stress{??? (some informal discussion in ``q.txt")}} \\ 
\stress{Input is the crux!! Try ...
A compromise is to confine to $\enc{}(\FOPi)$
(i.e., the image of the encoding that communicate only $\lrangle{Z}(Z\lrangle{b})$)}; \\
\stress{under this constraint the argument for input would be straightforward (see the box on the right).}\\
Three possible motiv: 1) encoding used to achieve correct FO interactions in HO, so limited contexts may be ok; 2) too demanding to require all kinds of input because the target (HO) model has somewhat more powerful computation ability;  3) in practice usually not all possible objects are communicated (rather some typical ones).
\end{minipage}
}}$
}

\tdup{
$\myxcancel{
\fbox{
\begin{minipage}{14cm}
\rc{Some more discussion: tackle general input directly?}\\
{First let us consider a special case, i.e., the input is $\lrangle{Z}Z\lrangle{b}$, somewhat the encoding of a transmitted name.
By Lemma \ref{l:opcor-conv}, that $\enc{P}\wt{a(\lrangle{Z}(Z\lrangle{b}))} T_1$ implies that $P \wt{a(b)} P_1$ and $T_1\SCB \enc{P_1}$. Because $P\WGB Q$, we know that $Q \wt{a(b)} Q_1 \WGB P_1$ and thus $\enc{P_1} \,\mathcal{R}\, \enc{Q_1}$. Then by Lemma \ref{l:opcor}, $\enc{Q} \wt{a(\lrangle{Z}(Z\lrangle{b}))} T_2$ and $T_2\SCB \enc{Q_1}$. So we have $T_1 \SCB \enc{P_1} \,\mathcal{R}\, \enc{Q_1} \SCB T_2$. \\
Now consider the general input $A$, which basically should take the form $\lrangle{Z}F[Z\lrangle{b}]$ for some context $F$, so as to make the applications happen in a correct manner (otherwise the discussion would be similar, e.g., $Z$ does not appear in $F$ or is not fed with a name).
Say $\enc{P}\wt{a(\lrangle{Z}F[Z\lrangle{b}])} T$. Then we know from the special case above that $\enc{Q}\wt{a(\lrangle{Z}F[Z\lrangle{b}])} T'$.
\bc{The problem here is how to relate $T$ with $T_1$ (and $T'$ with $T_2$)}.
Specifically, we know $T_1\equiv G[(\lrangle{x}\enc{R})\lrangle{b}] \equiv G[\enc{R}\fosub{b}{x}]$ for some context $G$ and $\lrangle{x}R$. Then $T\equiv G[F[(\lrangle{x}\enc{R})\lrangle{b}]] \equiv G[F[\enc{R}\fosub{b}{x}]]$. Similarly, we have $T_2\equiv H[(\lrangle{x}\enc{R'})\lrangle{b}] \equiv H[\enc{R'}\fosub{b}{x}]$ for some context $H$ and $\lrangle{x}R'$, and $T'\equiv H[F[(\lrangle{x}\enc{R'})\lrangle{b}]] \equiv H[F[\enc{R'}\fosub{b}{x}]]$. The situation is depicted below.
\[
\xymatrix{
  T \ar@{}[r]|-{\equiv} & G[F[\enc{R}\fosub{b}{x}]] \ar@{.}[rr]|-{?}\ar@{.}[d]|-{?}  &  & H[F[\enc{R'}\fosub{b}{x}]] \ar@{.}[d]|-{?} \ar@{}[r]|-{\equiv}  & T'  \\
  T_1 \ar@{}[r]|-{\equiv} & G[\enc{R}\fosub{b}{x}] \ar@{}[rr]|-{ \SCB\, \enc{P_1}\,\mathcal{R}\,\enc{Q_1}\,\SCB}  & &  H[\enc{R'}\fosub{b}{x}] \ar@{}[r]|-{\equiv} & T_2
}
\]
\rc{How can we proceed? Use normal bisimulation, we can set $F$ as $\overline{m}[\cdot]$. But then how? }
}
\end{minipage}
}}$
}


\item $\enc{P}\wt{\overline{a}[(b)\lrangle{Z}(Z\lrangle{b})]} T$. 
By Lemma \ref{l:opcor-conv}, $P \wt{\overline{a}(b)} P'$ and $T\SCB \enc{P'}$. Because $P\WGB Q$, we know that $Q \wt{\overline{a}(b)} Q' \WGB P'$ and thus $\enc{P'} \,\mathcal{R}\, \enc{Q'}$. Then by Lemma \ref{l:opcor}, $\enc{Q} \st{(b)\overline{a}[\lrangle{Z}(Z\lrangle{b})]} T'$ and $T'\SCB \enc{Q'}$. Consider the following pair
\[
(b)(T\para E[A]) \;\quad,\quad\;  (b)(T'\para E[A])
\] in which $b\notin \fn{E[X]}$ 
and $A\DEF \lrangle{Z}(Z\lrangle{b})$. So
\[
(b)(T\para E[A]) \SCB (b)(\enc{P'}\para E[A]) \;\quad,\quad\; (b)(\enc{Q'}\para E[A]) \SCB (b)(T'\para E[A])
\] By setting a context $C\DEF (b)([\cdot]\para E[A])$, we have the following pair in which $\enc{P'} \,\mathcal{R}\, \enc{Q'}$.
\[
C[\enc{P'}] \;\quad,\quad\; C[\enc{Q'}]
\] This suffices to close this case in terms of the up-to context requirement.

\item $\enc{P}\wt{\overline{a}[\lrangle{Z}(Z\lrangle{b})]} T$. 
This case is similar to the last case.

\item $\enc{P}\wt{\tau} T$. By Lemma \ref{l:opcor-conv}, $P \wt{\tau} P'$ and $T\SCB \enc{P'}$. From $P\WGB Q$, we know $Q \wt{} Q'\WGB P'$ and thus $\enc{P'} \,\mathcal{R}\, \enc{Q'}$. Then by Lemma \ref{l:opcor}, $\enc{Q} \wt{} T'$ and $T'\SCB \enc{Q'}$. So we have $T\SCB \enc{P'} \,\mathcal{R}\, \enc{Q'}\SCB T'$.
\end{itemize}

\end{proof}

\subsection{Completeness}
The completeness of the encoding is stated in the lemma below. {We note that completeness is true even if we do not constrain the domain to be the image of the encoded \FOPi\ processes. }
\begin{lemma}\label{l:completeness}
Suppose $P$ is a \FOPi\ process. Then $\enc{P} \WCB \enc{Q}$ implies $P\WGB Q$.
\end{lemma}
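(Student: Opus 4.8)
The plan is to prove completeness by contraposition combined with the operational-correspondence lemmas: assume $P\not\WGB Q$ and exhibit a distinguishing move that lifts to $\enc{P}\not\WCB\enc{Q}$. More precisely, I would show directly that the relation
\[
\mathcal{S} \DEF \{(P,Q) \;|\; \enc{P}\WCB\enc{Q}\}
\]
is a ground bisimulation on \FOPi\ processes; since $\WGB$ is the largest ground bisimulation, this yields $\enc{P}\WCB\enc{Q}\Rightarrow P\WGB Q$ immediately. The engine of the argument is that every first-order move of $P$ is mirrored, via Lemma \ref{l:opcor}, by a higher-order move of $\enc{P}$ of a fixed syntactic shape, and conversely every such higher-order move of $\enc{P}$ comes (via Lemma \ref{l:opcor-conv}) from a unique kind of first-order move of $P$. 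Because $\WCB$ must match these shaped moves, we can read the matching \FOPi\ transition off $\enc{Q}$.

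\textbf{Main steps.} First I would fix the case analysis on the \FOPi\ action $\alpha$ such that $P\st{\alpha}P'$, following the four visible shapes plus $\tau$. For the output case $P\st{\overline{a}b}P'$, Lemma \ref{l:opcor}(3) gives $\enc{P}\st{\overline{a}[\lrangle{Z}(Z\lrangle{b})]}T$ with $T\SCB\enc{P'}$. Since $\enc{P}\WCB\enc{Q}$, context bisimilarity forces $\enc{Q}\wt{(\ve{d})\overline{a}B}T'$ for some $B$ of the same kind (an abstraction), with the context clause holding for every $E[X]$. The crux here is to argue that $B$ must in fact have the form $\lrangle{Z}(Z\lrangle{c})$ for some name $c$: I would do this by instantiating the context clause with a discriminating observer (of the style of $T$ used in the soundness counterexample, e.g.\ a trigger $\triggerD$-fed process that applies the received abstraction to a fresh name and reports on a fresh channel), so that the only way $\enc{Q}$'s residual can match $\enc{P}$'s is if $B$ genuinely carries a single name. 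Once $B=\lrangle{Z}(Z\lrangle{c})$ is pinned down, Lemma \ref{l:opcor-conv}(3) yields $Q\wt{\overline{a}c}Q'$ with $T'\SCB\enc{Q'}$, and from $T\SCB\enc{P'}$, $T'\SCB\enc{Q'}$, $T\WCB T'$ (by congruence on the common residual) we must further force $c=b$ — again using an observer that tests the transmitted name — and conclude $\enc{P'}\WCB\enc{Q'}$, i.e.\ $(P',Q')\in\mathcal{S}$. The input, bound-output, and $\tau$ cases run analogously, using clauses (1)/(2), (4), and (5) respectively; the input case relies on the fact that the only inputs $\enc{P}$ admits are of the shapes covered by Lemma \ref{l:opcor-conv}(1)--(2), so the higher-order partner's input must decode to a first-order input on the same name.

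\textbf{The hard part} will be justifying that $\enc{Q}$'s matching output really transmits an abstraction of the \emph{name-carrying} form $\lrangle{Z}(Z\lrangle{c})$ and that the carried name $c$ equals $b$ (and, dually for inputs, that $Q$ inputs on the same channel with the correct name). A priori $\enc{Q}$ lives in the full \HOPiDd\ and could a priori emit a more exotic abstraction; the soundness counterexample shows precisely that higher-order contexts are more discriminating, so I cannot assume $\enc{Q}$ is itself in the image of the encoding. The remedy is that the discriminating power cuts in my favour for completeness: I would design, for each case, a concrete distinguishing context (built from triggers $\triggerD$ and parallel testers as in the $R_1,R_2$ example) that a bisimilar $\enc{Q}$ must survive, and extract from its survival both the syntactic form of $B$ and the identity of the transmitted name. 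Handling bound output (clause (4)) needs care with the scope extrusion of the fresh $b$ and the side condition $\{\ve{c},\ve{d}\}\cap\fn{E}=\emptyset$, but this is standard. Throughout I would lean on $\SCB\subseteq\WCB$ and the congruence of $\WCB$ to glue the $\SCB$-bridges supplied by the operational-correspondence lemmas onto the $\WCB$-matching supplied by the hypothesis.
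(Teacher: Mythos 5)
Your overall strategy---case analysis on the \FOPi\ action, translating moves back and forth through Lemma \ref{l:opcor} and Lemma \ref{l:opcor-conv}, and pinning down the shape $\lrangle{Z}(Z\lrangle{b})$ of the emitted abstraction (and the identity of the carried name) by means of discriminating contexts---is essentially the paper's. The paper likewise observes that $\enc{Q}$ can only emit processes of that shape, that a mismatched name would be detected by a context, and, in the free-output case, extracts $\enc{P'}\WCB\enc{Q'}$ by instantiating the context clause with $E=0$, exactly as you intend via ``congruence on the common residual.''

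The genuine gap is in the bound-output case, and it is caused by your choice of target relation. You propose to show that $\mathcal{S}=\{(P,Q)\mid\enc{P}\WCB\enc{Q}\}$ is a \emph{ground} bisimulation; for $P\st{\overline{a}(b)}P'$ this obliges you to produce $Q\wt{\overline{a}(b)}Q'$ with $(P',Q')\in\mathcal{S}$, i.e.\ $\enc{P'}\WCB\enc{Q'}$ with $b$ now free on both sides. But the output clause of context bisimulation only yields the \emph{restricted} judgements $(b)(E[\lrangle{Z}(Z\lrangle{b})]\para T)\WCB(b)(E[\lrangle{Z}(Z\lrangle{b})]\para T')$; nothing lets you strip the restriction $(b)$ to recover the unrestricted $\enc{P'}\WCB\enc{Q'}$, and the $E=0$ trick that closes the free-output case is useless here. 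This is precisely why the paper does \emph{not} prove the relation to be a ground bisimulation but a \emph{local} bisimulation (Definition \ref{ext-local-bisi}), whose bound-output clause keeps the restriction and only demands $(b)(P'\para R)\,\mathcal{R}\,(b)(Q'\para R)$ for every $R$; that obligation is then discharged by arguing that, as $E$ ranges over all contexts, the instance $E[\lrangle{Z}(Z\lrangle{b})]=\enc{R}$ is eventually hit (surjectivity of the encoding's inverse), after which one concludes via the coincidence of local bisimilarity with $\WGB$. Your remark that bound output ``is standard'' hides exactly this obstruction: either switch to local bisimulation as the paper does, or supply an argument for opening the restriction, which you currently do not have.
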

\begin{proof}
\tdup{
\stress{ \scriptsize DONE! $\bcancel{\mbox{TODO: if necessary,}}$ \\
(1) (for bound output in particular) maybe Use ``local bisimulation" on the \FOPi\ side and context bisimulation on \HOPiDd\ side, to analyze using ``context surjection" (e.g., \cite{Fu05b,Xu12}). \\
(2) maybe use certain up-to technique on the FO side.}
}

\tdup{
\stress{\scriptsize NOTICE that in input/output bisimulation (to prove $P\WGB Q$):
\begin{itemize}
\item to prove $P$ and $Q$ bisimulate on input: (by going through the HO processes $\enc{P},\enc{Q}$) $\enc{P}\st{a(\lrangle{Z}(Z\lrangle{b}))}$ must be able to be matched by $\enc{Q}\st{a(\lrangle{Z}(Z\lrangle{b}))}$;
\item to prove $P$ and $Q$ bisimulate on free output:(by going through the HO processes $\enc{P},\enc{Q}$)  $\enc{P}\st{\overline{a}[\lrangle{Z}(Z\lrangle{b})]}$ must be able to be matched by $\enc{Q}\st{\overline{a}[\lrangle{Z}(Z\lrangle{b})]}$, because $\enc{Q}$ can only emit such form of processes, and moreover if the matching is $\enc{Q}\st{\overline{a}[\lrangle{Z}(Z\lrangle{c})]}$ then one can design a context to distinguish $\enc{P}$ and $\enc{Q}$;
\item to prove $P$ and $Q$ bisimulate on bound output:(by going through the HO processes $\enc{P},\enc{Q}$)  $\enc{P}\st{\overline{a}[(b)\lrangle{Z}(Z\lrangle{b})]}$ must be able to be matched by $\enc{Q}\st{(b)\overline{a}[\lrangle{Z}(Z\lrangle{b})]}$ (apply $\alpha$-conversion if needed), because $\enc{Q}$ can only emit such form of processes, and moreover if the matching is $\enc{Q}\st{(c)\overline{a}[\lrangle{Z}(Z\lrangle{c})]}$ (or $\enc{Q}\st{\overline{a}[\lrangle{Z}(Z\lrangle{c})]}$) then one can design a context to distinguish between $\enc{P}$ and $\enc{Q}$;
\end{itemize}}
}

We show that $\mathcal{R}\DEF \{(P,Q) \,|\, \enc{P} \WCB \enc{Q}\} \cup \WGB$ is a local bisimulation. 
Suppose $P\,\mathcal{R}\, Q$. There are several cases. 
\begin{itemize}
\item $P\wt{a(b)} P'$. 
By Lemma \ref{l:opcor}, $\enc{P} \wt{a(\lrangle{Z}(Z\lrangle{b}))} T$ and $T\SCB \enc{P'}$. Because $\enc{P} \WCB \enc{Q}$, we know that $\enc{Q} \wt{a(\lrangle{Z}(Z\lrangle{b}))} T'\WCB T$. By Lemma \ref{l:opcor-conv}, $Q \wt{a(b)} Q'$ and $T'\SCB \enc{Q'}$. Thus we have $\enc{P'} \SCB T \WCB T'\SCB \enc{Q'}$, so $P'\,\mathcal{R}\, Q'$, which fulfills this case.

\item $P\wt{\overline{a}b} P'$. 
By Lemma \ref{l:opcor}, $\enc{P} \wt{\overline{a}[\lrangle{Z}(Z\lrangle{b})]} T$ and $T\SCB \enc{P'}$. Since $\enc{P} \WCB \enc{Q}$, we know that $\enc{P}$ must be able to be matched by $\enc{Q}\wt{\overline{a}[\lrangle{Z}(Z\lrangle{b})]} T'$, because $\enc{Q}$ can only output such shape of processes, and if the matching is, e.g., $\enc{Q}\wt{\overline{a}[\lrangle{Z}(Z\lrangle{c})]} T''$ then a context can be designed to distinguish between $\enc{P}$ and $\enc{Q}$. So for every $E[X]$, we have $T \para E[\lrangle{Z}(Z\lrangle{b})] \WCB T' \para E[\lrangle{Z}(Z\lrangle{b})]$. By Lemma \ref{l:opcor-conv}, $Q \wt{\overline{a}b} Q'$ and $T'\SCB \enc{Q'}$. So we know
\begin{equation}\label{eq:comp4}
\enc{P'} \para E[\lrangle{Z}(Z\lrangle{b})) \WCB \enc{Q'} \para E[\lrangle{Z}(Z\lrangle{b})]
\end{equation}
We want to show
\begin{equation}\label{eq:comp5}
P' \mathcal{R} Q' \quad \mbox{ that is, }\; \enc{P'} \WCB \enc{Q'}
\end{equation}
By setting $E$ to be $0$ in (\ref{eq:comp4}), we obtain (\ref{eq:comp5}), and thus close this case.

\item $P\wt{\overline{a}(b)} P'$. 
By Lemma \ref{l:opcor}, $\enc{P} \wt{(b)\overline{a}[\lrangle{Z}(Z\lrangle{b})]} T$ and $T\SCB \enc{P'}$. Since $\enc{P} \WCB \enc{Q}$, we know that $\enc{P}$  must be able to be matched by $\enc{Q}\wt{(b)\overline{a}[\lrangle{Z}(Z\lrangle{b})]} T'$ (apply $\alpha$-conversion if needed). This is because $\enc{Q}$ can only emit such form of processes, and moreover if the matching does not have a bound name (e.g., $\enc{Q}\wt{\overline{a}[\lrangle{Z}(Z\lrangle{c})]} T''$) then one can design a context to distinguish $\enc{P}$ and $\enc{Q}$. So for every $E[X]$ s.t. $b\notin \mbox{fn}(E)$, we have $(b)(T \para E[\lrangle{Z}(Z\lrangle{b})]) \WCB (b)(T' \para E[\lrangle{Z}(Z\lrangle{b})])$. By Lemma \ref{l:opcor-conv}, $Q \wt{\overline{a}(b)} Q'$ and $T'\SCB \enc{Q'}$. So we know
\begin{equation}\label{eq:comp1}
(b)(\enc{P'} \para E[\lrangle{Z}(Z\lrangle{b})]) \WCB (b)(\enc{Q'} \para E[\lrangle{Z}(Z\lrangle{b})])
\end{equation}
In terms of local bisimulation \cite{Fu05b,Xu12}, for every \FOPi\ process $R$, we need to show
\begin{equation}\label{eq:comp2}
(b)(P'\para R) \,\mathcal{R}\, (b)(Q'\para R) \quad\mbox{ i.e., }\quad (b)(\enc{P'}\para \enc{R}) \WCB (b)(\enc{Q'} \para \enc{R})
\end{equation}
Comparing equations (\ref{eq:comp1}) and (\ref{eq:comp2}), one can see that the different part is $E[\lrangle{Z}(Z\lrangle{b})]$ and $\enc{R}$. Since the inverse of the encoding is a surjection, if all possible forms of $E$ is itinerated, $\enc{R}$ must be hit somewhere (i.e., some choice of $E$ makes $E[\lrangle{Z}(Z\lrangle{b})]$ and $\enc{R}$ equal). Therefore we infer that (\ref{eq:comp2}) 
is true and thus complete this case.

\item $P\wt{\tau} P'$. By Lemma \ref{l:opcor}, $\enc{P} \wt{\tau} T$ and $T\SCB \enc{P'}$. Because $\enc{P} \WCB \enc{Q}$, we know $\enc{Q} \wt{} T' \WCB T$. Then by Lemma \ref{l:opcor-conv}, $Q \wt{} Q'$ and $T'\SCB \enc{Q'}$. 
So we have $P'\,\mathcal{R}\, Q'$ because $\enc{P'}\SCB T \WCB T' \SCB \enc{Q'}$.
\end{itemize}

\end{proof}


\section{Normal bisimulation for \HOPiDd}\label{s:normal}
In this section, we show that context bisimulation in \HOPiDd\ can be characterized by the much simpler normal bisimulation. 

\tdup{
\bc{TODO (in \HOPiDd):}
\begin{itemize}
\item Factorization theorem; \bc{$\checkmark$}
\item Definition of normal bisimulation $\WNB$; \bc{$\checkmark$}
\item Coincidence between $\WNB$ and $\WCB$ (i.e., $\WNB$ implies $\WCB$ using the Factorization theorem). \bc{$\checkmark$}
\end{itemize}
\sep\sep

\fbox{We stipulate that $\rc{\trigger} \DEF \overline{m}$, $\rc{\triggerD} \DEF \lrangle{Z}\overline{m}Z$, and $\rc{\triggerd} \DEF \lrangle{z}\overline{m}[\lrangle{Y}(Y\lrangle{z})]$.
}
}

\subsubsection*{The factorization theorem}
Below is the factorization theorem in presence of parameterization on names (and on processes as well). We recall that $\equiv$ is the structural congruence.
As explained in Section \ref{s:introduction}, the upshot of establishing the factorization theorem is to find the right small processes so-called triggers. Here we have three kinds of triggers, to tackle different kinds of parameterizations. In particular, we stipulate that the triggers are as follows:  $\rrc{\triggerd} \DEF \lrangle{z}\overline{m}[\lrangle{Y}(Y\lrangle{z})]$, $\rrc{\triggerD} \DEF \lrangle{Z}\overline{m}Z$, and $\rrc{\trigger} \DEF \overline{m}$. These triggers are of somewhat a similar flavor but quite different in shape, with the aim at factorizing out respectively a name abstraction, a process abstraction and a non-abstraction process in certain context. The first trigger, i.e.,  $\triggerd$, is the main contribution of this work, whereas the other two are inherited from \cite{Xu13} and \cite{San92} respectively.
\begin{theorem}[Factorization]\label{factor-bigd-smalld} 
Given $E[X]$ of \HOPiDd, it holds for every $A$, fresh $m$ (i.e., $m\notin fn(E,A)$) that
\begin{itemize}
\item[(1)] if $E[X]$ is not an abstraction, then
\begin{itemize}
\item[(i)] if $A$ is not an abstraction, then
$E[A] \WCB (m)(E[\trigger] \para  !m.A)
$;
\item[(ii)] if $A$ is an abstraction on process, then
$E[A] \WCB (m)(E[\triggerD] \para  !m(Z).A\lrangle{Z})
$;
\item[(iii)] if $A$ is an abstraction on name, then
$E[A] \WCB (m)(E[\triggerd] \para  !m(Z).Z\lrangle{A})
$.
\end{itemize}

\item[(2)]  else if $E[X]$ is an abstraction, i.e., $E[X]\equiv \ve{\lrangle{U}}E'$ for some non-abstraction $E'$ (here $\ve{\lrangle{U}}$ denotes the abstractions prefixing $E'$), 
then
\begin{itemize}
\item[(i)] if $A$ is not an abstraction, then
$E[A] \WCB \ve{\lrangle{U}}((m)(E'[\trigger] \para  !m.A))
$;
\item[(ii)] if $A$ is an abstraction on process, then
$E[A] \WCB \ve{\lrangle{U}}((m)(E'[\triggerD] \para  !m(Z).A\lrangle{Z}))
$;
\item[(iii)] if $A$ is an abstraction on name, then
$E[A] \WCB \ve{\lrangle{U}}((m)(E'[\triggerd] \para  !m(Z).Z\lrangle{A}))
$.
\end{itemize}


\end{itemize}
\end{theorem}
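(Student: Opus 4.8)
The plan is to prove each identity by exhibiting a single context bisimulation that contains the corresponding pair, treating the three kinds of trigger uniformly and reducing clause (2) to clause (1). First I would dispatch clause (2): writing $E[X]\equiv \ve{\lrangle{U}}E'[X]$ with $E'$ non-abstraction, both sides of each equation in (2) are obtained by applying the abstraction prefix $\ve{\lrangle{U}}(\cdot)$ to the two sides of the corresponding equation in (1) instantiated at $E'$. Since $\WCB$ is a congruence it is in particular preserved by $\ve{\lrangle{U}}(\cdot)$, so (2)(i)--(iii) follow at once from (1)(i)--(iii) applied to $E'$. It therefore suffices to establish the non-abstraction case (1).

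For (1), fix the kind of $A$ and the associated trigger/repository, and define
\[
\mathcal{S} \DEF \{\,(C[A],\; (m)(C[\mathit{Tr}]\para \mathit{R}))\;:\; C\text{ a multi-hole context},\ m\notin fn(C,A)\,\}
\]
where $C[A]$ (resp.\ $C[\mathit{Tr}]$) fills every hole with $A$ (resp.\ the trigger), and $(\mathit{Tr},\mathit{R})$ is $(\trigger,!m.A)$ when $A$ is not an abstraction, $(\triggerD,!m(Z).A\lrangle{Z})$ when $A$ is a process abstraction, and $(\triggerd,!m(Z).Z\lrangle{A})$ when $A$ is a name abstraction. The stated pairs are the instances $C=E$. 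I would then prove that the symmetric closure of $\mathcal{S}$ is a context bisimulation up to $\equiv$, up to $\SCB$ and up to context (the soundness of this up-to technique for $\WCB$ is the one already invoked for Lemma \ref{l:soundness}).

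The transfer analysis splits according to the origin of a transition of $C[A]$. If it stems entirely from $C$ without activating a hole, the right-hand side performs the matching transition with its holes still carrying triggers, landing again in $\mathcal{S}$. If it activates one hole occurrence -- for the plain case this is $A$ itself moving, for the abstraction cases it is an application of the hole, which by the extensional rule for $\equiv$ unfolds to the desired instance -- then on the right the corresponding trigger first performs a single handshake $\tau$ with the repository, which by structural congruence yields exactly the same instance:
\[
(\lrangle{Z}\overline{m}Z)\lrangle{W}\para !m(Z).A\lrangle{Z}\;\st{\tau}\;A\lrangle{W}\para !m(Z).A\lrangle{Z}
\]
for a process abstraction, and
\[
(\lrangle{z}\overline{m}[\lrangle{Y}(Y\lrangle{z})])\lrangle{n}\para !m(Z).Z\lrangle{A}\;\st{\tau}\;A\lrangle{n}\para !m(Z).Z\lrangle{A}
\]
for a name abstraction (and $\overline{m}\para !m.A\st{\tau}A\para !m.A$ in the plain case). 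The right-hand side thus matches the observable step weakly via $\wt{\widehat{\alpha}}$; the repository persists to serve later activations and $m$ stays restricted and fresh. The only bookkeeping is that the fetched copy materialises in parallel while on the left it sits in place, but since a hole is activated only at a parallel position these are reconciled by $\equiv$, so the residuals lie again in $\mathcal{S}$. A synchronisation between $C$ and a hole, or between two holes, is the combination of these cases.

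The step I expect to be the main obstacle is the contextual clause for output. When $C[A]\st{(\ve{c})\overline{a}W}P'$ emits an object $W$ containing hole occurrences -- in particular when $A$ or an instance of it is itself communicated -- the matching object $W'$ on the right carries triggers in the corresponding positions, and context bisimulation demands that for every receiving $E[X]$ the pair $(\ve{c})(E[W]\para P')$ and $(\ve{d})(E[W']\para Q')$ be related. What makes this go through is the syntactic invariant sustaining $\mathcal{S}$: the right component is always the left component with every occurrence of $A$ replaced by the trigger and wrapped in $(m)(\cdots \para \mathit{R})$. Plugging into an arbitrary $E$ and floating the (fresh) restriction on $m$ outward with the laws of $\equiv$ merely enlarges the multi-hole context, so the tested pair is again of the form $(C''[A],(m)(C''[\mathit{Tr}]\para \mathit{R}))$ up to $\equiv$; this is exactly what the up-to-context formulation delivers. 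The remaining care is to check that the extra handshake $\tau$ never disturbs the weak matching and that freshness of $m$ is maintained throughout.
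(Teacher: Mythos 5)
Your outline is sound, and the delicate points you single out (the handshake $\tau$ being absorbed into a weak match, the released copy of $A$ being re-absorbed into the multi-hole context up to $\equiv$, the output clause landing back in $\mathcal{S}$ after plugging into an arbitrary receiver and floating the restriction on $m$ outward) do go through. It is, however, organised quite differently from the development the paper relies on: the paper does not prove the theorem from scratch, but imports clauses (i) and (ii) from \cite{San92,SW01a} and \cite{Xu13} and asserts that clause (iii) follows ``the same technical routine''. That routine is a structural induction on $E$, whose base case is the single law $(m)(\triggerd\lrangle{n}\para\, !m(Z).Z\lrangle{A})\WCB A\lrangle{n}$ (and its analogues for the other two triggers) and whose inductive cases are discharged by congruence of $\WCB$ together with distributivity laws pushing the private replicated resource $(m)(\,\cdot\,\para\, !m(Z).Z\lrangle{A})$ through parallel composition, prefixes and restriction. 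Your proof instead exhibits one global relation over all multi-hole contexts and closes it up to context. The inductive route only needs those small algebraic laws, each provable by an elementary bisimulation; your route gives a single uniform argument, but at the price of presupposing the soundness of weak context bisimulation up to context and up to $\SCB$ in \HOPiDd, which is itself a non-trivial result that the paper only invokes, via \cite{SW01a,BPPR15}, in the proof of Lemma \ref{l:soundness}. Two points to tighten: in reducing (2) to (1) you apply (1) to $E'$, which contains the free variables $\ve{U}$, so you need the factorization for open contexts (equivalently, for all closing instantiations of $\ve{U}$, which is how $\WCB$ on abstractions must be read); and in the definition of $\mathcal{S}$ the freshness condition $m\notin fn(C,A)$ has to be re-established after an input of an arbitrary $B$ from the environment, which may mention $m$, by $\alpha$-converting the restriction before closing the case.
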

In Theorem \ref{factor-bigd-smalld}, the clause (i) of (1) and (2) is actually Sangiorgi's seminal work \cite{San92}. The clause (ii) of (1) and (2) is analyzed in \cite{Xu13}. The clause (iii) of (1) and (2), which depicts the factorization for abstraction on names, can be discussed through a technical routine almost the same as (ii). 
With regard to more details we refer the reader to \cite{San92, SW01a, Xu13}.

The method of \emph{trigger} (including the technical approach) is well-developed in the field, due to the fundamental framework by Sangiorgi \cite{SW01a}. So the key  to establishing factorization for processes allowing abstraction on names is the \emph{trigger}, which is not known for a long time in contrast to the cases of abstraction on processes and that without abstractions. Once a right trigger is found, the rest of discussion is then almost standard. Below we give an example of the factorization concerning abstraction on names. 

\noindent\textbf{Example} The basic idea of factorization concerning abstraction on names can be illustrated in the following example in which $m$ is fresh (i.e., not in $A\lrangle{\rrc{d}}$).
\begin{eqnarray}
A\lrangle{\rrc{d}} &\WCB& (m)(\; (\lrangle{\rrc{z}}\overline{m}[\rbc{\lrangle{Y}(Y\lrangle{z})}])\lrangle{\rrc{d}} \,\para\, m(\rbc{Z}).\rbc{Z}\lrangle{A} \;) 
\; \equiv\; (m)(\; \overline{m}[\rbc{\lrangle{Y}(Y\lrangle{\rrc{d}})}] \,\para\, m(\rbc{Z}).\rbc{Z}\lrangle{A} \;) \nonumber
\end{eqnarray}
\noindent For example, if $A$ is $\lrangle{x}\overline{x}b$, then $A\lrangle{d} \equiv \overline{d}b$, and 
\[
\begin{array}{lclcl}
A\lrangle{d}  &\WCB& (m)(\; \overline{m}[\lrangle{Y}(Y\lrangle{d})] \,\para\, m(Z).Z\lrangle{A} \;) 
 &\WCB& (m)(\; (\lrangle{Y}(Y\lrangle{d}))\lrangle{A} \;) \;\equiv\; A\lrangle{d} \;\equiv\; \overline{d}b 
\end{array}
\]

\subsubsection*{Normal bisimulation for \HOPiDd}
Below is the definition of normal bisimulation whose clauses are designed with regard to the factorization theorem. We recall that $\rrc{\trigger} \DEF \overline{m}$, $\rrc{\triggerD} \DEF \lrangle{Z}\overline{m}Z$, and $\rrc{\triggerd} \DEF \lrangle{z}\overline{m}[\lrangle{Y}(Y\lrangle{z})]$.
\begin{definition}\label{normal-bisi-Dd} 
A symmetric binary relation $\mathcal{R}$ on closed processes of \HOPiDd\ is a normal bisimulation, if whenever $P\,\mathcal{R}\, Q$ the following properties hold:
\begin{enumerate}
\item If $P \st{a(\trigger)} P'$ ($m$ is fresh w.r.t. $P$ and $Q$), then $Q \wt{a(\trigger)} Q'$ for some $Q'$ s.t.  $P'\,\mathcal{R}\, Q'$;
\item If $P \st{a(\triggerD)} P'$ ($m$ is fresh w.r.t. $P$ and $Q$), then $Q \wt{a(\triggerD)} Q'$ for some $Q'$ s.t.  $P'\,\mathcal{R}\, Q'$;
\item If $P \st{a(\triggerd)} P'$ ($m$ is fresh w.r.t. $P$ and $Q$), then $Q \wt{a(\triggerd)} Q'$ for some $Q'$ s.t.  $P'\,\mathcal{R}\, Q'$;

\item If $P \st{(\ve{c})\overline{a}A} P'$ and $A$ is not an abstraction, then $Q \wt{(\ve{d})\overline{a}B} Q'$ for some $\ve{d},Q'$ and $B$ that is not an abstraction, and it holds that ($m$ is fresh) ~
$(\ve{c})(P'\para !\rbc{m.A}) \; \mathcal{R}\;  (\ve{d})(Q'\para  !\rbc{m.B})$.
\item If $P \st{(\ve{c})\overline{a}A} P'$ and $A$ is an abstraction on process, then $Q \wt{(\ve{d})\overline{a}B} Q'$ for some $\ve{d},Q'$ and $B$ that is an abstraction on process, and it holds that ($m$ is fresh) ~
$(\ve{c})(P'\para !\rbc{m(Z).A\lrangle{Z}}) \; \mathcal{R}\;  (\ve{d})(Q'\para  !\rbc{m(Z).B\lrangle{Z}})$.
\item If $P \st{(\ve{c})\overline{a}A} P'$ and $A$ is an abstraction on name, then $Q \wt{(\ve{d})\overline{a}B} Q'$ for some $\ve{d},Q'$ and $B$ that is an abstraction on name, and it holds that ($m$ is fresh) ~
$(\ve{c})(P'\para !\rbc{m(Z).Z\lrangle{A}}) \; \mathcal{R}\;  (\ve{d})(Q'\para  !\rbc{m(Z).Z\lrangle{B}})$.

\item If $P \st{\tau} P'$, then $Q \wt{} Q'$ for some $Q'$ s.t. $P'\,\mathcal{R}\, Q'$;
\end{enumerate}
Process $P$ is normal bisimilar to $Q$, written $P\,\WNB\, Q$, if $P\,\mathcal{R}\, Q$ for some normal bisimulation $\mathcal{R}$. Relation \WNB\ is called normal bisimilarity, and is a congruence (see \cite{San92} for a reference). The strong version of \WNB\ is denoted by \SNB. 
\end{definition}

\subsubsection*{Coincidence between normal bisimilarity and context bisimilarity in \HOPiDd}
Now we have the following theorem. 
\iftoggle{appendixing}{%
 The proof is in Appendix \ref{a:proofs-normal}.
}{%
 The detailed proof is referred to \cite{Xu16app}.
}
\begin{theorem}\label{normal-characterization-hopiDd} 
In \HOPiDd, normal bisimilarity coincides with context bisimilarity; that is, $\WNB \,=\, \WCB$.
\end{theorem}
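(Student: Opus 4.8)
The plan is to prove the two inclusions $\WNB\,\subseteq\,\WCB$ and $\WCB\,\subseteq\,\WNB$ separately. The direction $\WCB\,\subseteq\,\WNB$ is the routine one: since context bisimilarity matches outputs against \emph{every} receiving environment $E[X]$, one simply specializes $E$ to the particular trigger-receiving contexts appearing in clauses (4)--(6) of Definition \ref{normal-bisi-Dd} (e.g.\ $E[X] \DEF\; !m.X$, $!m(Z).X\lrangle{Z}$, or $!m(Z).Z\lrangle{A}$ read appropriately), and checks that the input clauses (1)--(3) are instances of the general input clause of context bisimulation with $A$ taken to be $\trigger$, $\triggerD$, $\triggerd$. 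Hence $\WCB$ itself satisfies all seven clauses and is a normal bisimulation, giving $\WCB\,\subseteq\,\WNB$.

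The substantive direction is $\WNB\,\subseteq\,\WCB$, and here the Factorization Theorem (Theorem \ref{factor-bigd-smalld}) is the essential tool. First I would establish that $\WNB$ is preserved by the relevant contexts --- at minimum by parallel composition and restriction, and ideally that it is a congruence (as asserted in Definition \ref{normal-bisi-Dd}); this congruence property is what licenses cancelling common components, exactly as sketched with the chasing diagrams in Section \ref{s:introduction}. Then I would show $\WNB$ is a context bisimulation by verifying the two clauses of Definition \ref{context-bisimulation}. The input and $\tau$ clauses transfer almost directly: given $P\,\WNB\,Q$ and a general input $P\st{a(A)}P'$, the trick is that the LTS lets us match on the corresponding trigger action instead, and then use factorization to recover the behaviour on the genuine $A$. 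The output clause is the heart of the argument.

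For the output clause, suppose $P\st{(\ve{c})\overline{a}A}P'$; by the matching clause of $\WNB$ we get $Q\wt{(\ve{d})\overline{a}B}Q'$ with $A,B$ of the same kind and, say in the non-abstraction case, $(\ve{c})(P'\para\,!m.A)\;\WNB\;(\ve{d})(Q'\para\,!m.B)$. The goal is to upgrade this to $(\ve{c})(E[A]\para P')\;\WCB\;(\ve{d})(E[B]\para Q')$ for an arbitrary environment $E[X]$. I would apply Theorem \ref{factor-bigd-smalld} to rewrite $E[A]\WCB(m)(E[\trigger]\para\,!m.A)$ and $E[B]\WCB(m)(E[\trigger]\para\,!m.B)$, so that both sides are, up to $\WCB$, of the form $(m)(E[\trigger]\para(\cdots))$ sharing the common context $E[\trigger]$; the residual trigger repositories $!m.A$ and $!m.B$ are precisely the ones related by the normal-bisimulation output clause. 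The abstraction cases are identical in structure, using clauses (ii)/(iii) of the factorization with triggers $\triggerD,\triggerd$ and repositories $!m(Z).A\lrangle{Z}$, $!m(Z).Z\lrangle{A}$. To make this rigorous I expect to define a candidate relation
\[
\mathcal{S}\;\DEF\;\{\,((\ve{c})(E[A]\para P'),\,(\ve{d})(E[B]\para Q'))\mid \ldots\,\}\cup\WNB,
\]
closed under the contexts produced by factorization, and prove it is a context bisimulation up to $\WCB$ and structural congruence.

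The main obstacle will be handling the \emph{name-abstraction} case (clause (6), trigger $\triggerd$) cleanly, since this is the genuinely new ingredient not covered by \cite{San92,Xu13}: one must check that factorization via $\triggerd\DEF\lrangle{z}\overline{m}[\lrangle{Y}(Y\lrangle{z})]$ interacts correctly with arbitrary receiving environments and with freshness of $m$ relative to $E$, $A$, and the extruded names $\ve{c},\ve{d}$. A secondary difficulty is making the congruence/up-to reasoning airtight in the higher-order setting --- ensuring that placing related processes under inputs and name-abstractions is sound --- which is why I would lean on the up-to-context techniques already invoked in the soundness proof of Lemma \ref{l:soundness} rather than attempting a bare bisimulation.
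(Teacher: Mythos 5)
Your proposal is correct and follows essentially the same route the paper takes (the proof itself is deferred to the appendix/\cite{Xu16app}, but the paper explicitly signals it: the easy inclusion by instantiating environments to trigger contexts, and the converse via the Factorization Theorem \ref{factor-bigd-smalld} together with the congruence/cancellation argument sketched with the chasing diagrams in Section \ref{s:introduction}, with the name-abstraction trigger $\triggerd$ as the only genuinely new case). You have correctly identified both the key lemma and the one real subtlety, namely that the congruence/up-to-context reasoning must be organized so as not to presuppose the coincidence being proved.
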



\section{Conclusion}\label{s:conclusion}
In this paper, we have exhibited a new encoding of name-passing in the higher-order paradigm that allows parameterization, and a normal bisimulation in that setting as well. In the former, we demonstrate the conformance of the encoding to the well-established criteria in the literature. In the latter, we prove the coincidence between normal and context bisimulation by pinpointing how to factorize an abstraction on some name. The encoding of this work is inspired by the one proposed by Alan Schmitt during the communication concerning another work. That encoding, as given below, somewhat swaps the roles of input and output and treats $a(x).P$ somehow as $a.\lrangle{x}P$ (like those calculi admitting abstractions and concretions \cite{San92}). 
\[
\begin{array} {rcl}
\enc{a(x).P} & \DEF & \overline{a}[\lrangle{x}\enc{P}]\\
\enc{\overline{a}b.Q} &\DEF & a(Y).(Y\lrangle{b}\para \enc{Q}) 
\end{array}
\]
From the angle of achieving first-order interaction, the encoding strategy above is truly interesting. However, it appears not to satisfy some usual operational correspondence (say, in \cite{Gor08a} or \cite{LPSS10}), and full abstraction is not quite clear. 
Based on the results in this paper, it is tempting to expect that this encoding have some (nearly) same properties, and this is worthwhile for more investigation.

The results of this paper can be dedicated to facilitate further study on the expressiveness of higher-order processes. The following questions, among others, are still open: whether \FOPi\ can be encoded in a higher-order setting only allowing parameterization on processes; whether there is a better encoding of \FOPi\ than the one in \cite{XYL15}, using higher-order processes only capable of parameterization on names; whether \HOPid\ afford a normal-like characterization of context bisimulation.

\sepp
\noindent\textbf{Acknowledgements}\;\;
We thank the anonymous referees for their useful comments on this article. 

\bibliographystyle{eptcs}
\bibliography{process}

\iftoggle{appendixing}{%
\clearpage
\appendix
\noindent\textbf{\Large Appendix}
\input{appendix_proof_encoding.tex}
\input{appendix_proof_normal.tex}
}{%
}


\end{document}